\documentclass[a4paper,11pt]{article}

\title{Parameterized Convexity Testing\footnote{The work of the first author was supported by the Israel Science Foundation, grant number 592/17 and 822/18. The work of the second author was supported by the Israel Science Foundation, grant number 379/21.}}

%\titlerunning{Parametrize Convexity Testing}

\author{Abhiruk Lahiri\footnote{Ariel University, Israel, \texttt{abhiruk@ariel.ac.il}} \and Ilan Newman\footnote{University of Haifa, Israel, \texttt{ilan@cs.haifa.ac.il}} \and
Nithin Varma\footnote{Chennai Mathematical Institute, India, \texttt{nithinvarma@cmi.ac.in}}}

\date{ }
\bibliographystyle{alpha}% the mandatory bibstyle
\usepackage{algorithm,algpseudocode}
\usepackage{mathrsfs,mathtools,xcolor,fullpage}
\usepackage{hyperref}
\usepackage{cleveref}
\usepackage{amsmath,amsthm,amsfonts,amssymb,enumerate}

\newcommand{\eps}{\varepsilon}
\newcommand{\ord}{\mathsf{order}}
\newcommand\restr[2]{{
  \left.\kern-\nulldelimiterspace#1\right|_{#2}
  }}

\newtheorem{theorem}{Theorem}[section]
\newtheorem{lemma}[theorem]{Lemma}
\newtheorem{definition}[theorem]{Definition}
\newtheorem{fact}[theorem]{Fact}
\newtheorem{observation}[theorem]{Observation}
\newtheorem{claim}[theorem]{Claim}
\newtheorem{remark}[theorem]{Remark}
\newcommand{\ignore}[1]{}

\begin{document}
\maketitle

\begin{abstract}
In this work, we develop new insights into the fundamental problem of convexity testing of real-valued functions over the domain $[n]$. 
Specifically, we present a nonadaptive algorithm that, given inputs $\eps \in (0,1), s \in \mathbb{N}$, and oracle access to a function, $\eps$-tests convexity in $O(\log (s)/\eps)$, where $s$ is an upper bound on the number of distinct discrete derivatives of the function.
We also show that this bound is tight.
Since $s \leq n$, our query complexity bound is at least as good as that of the optimal convexity tester (Ben Eliezer; ITCS 2019) with complexity $O(\frac{\log \eps n}{\eps})$; our bound is strictly better when $s = o(n)$. The main contribution of our work is to appropriately parameterize the complexity of convexity testing to circumvent the worst-case lower bound (Belovs et al.; SODA 2020) of $\Omega(\frac{\log (\eps n)}{\eps})$ expressed in terms of the input size and obtain a more efficient algorithm.
\end{abstract}

\section{Introduction}

A function $f:[n] \to \mathbb{R}$ is convex if $f(x) - f(x-1) \leq f(x+1) - f(x)$ for all $x \in \{2,3,\dots,n-1\}$.
Convexity of functions is a natural and interesting property.
Given oracle access to a function $f$, an $\eps$-tester for convexity has to decide with high constant probability, whether $f$
is a convex function or whether every convex function evaluates differently from $f$ on at least $\eps n$ domain points, where $\eps \in (0,1)$. 
Parnas, Ron, and Rubinfeld~\cite{ParnasRR06} gave an $\eps$-tester
for convexity that has query complexity $O(\frac{\log n}{\eps})$.
Blais, Raskhodnikova, and Yaroslavtsev~\cite{BlaisRY14} showed that this bound is tight for constant $\eps$ for nonadaptive algorithms\footnote{The queries of a nonadaptive algorithm does not depend on the answers to the previous queries. The algorithm is adaptive otherwise.}. An improved upper bound of $O(\frac{\log(\eps n)}{\eps})$ was shown by Ben-Eliezer~\cite{Ben-Eliezer19} in a work on the more general question of testing local properties. 
Recently, Belovs, Blais and Bommireddi~\cite{BelovsBB20} complemented this result by showing a tight lower bound of $\Omega(\frac{\log(\eps n)}{\eps})$.

In this work, we further investigate and develop new insights into this well-studied problem, thereby asserting that there is more way to go towards a full understanding of testing convexity of functions $f:[n] \to \mathbb{R}$. 
We show that the number of distinct discrete derivatives $s$, as opposed to the input size $n$, is the right input parameter to express the complexity of convexity testing, where a discrete derivative is a value of the form $f(x+1) - f(x)$ for $x \in [n-1]$.
Specifically, we design a nonadaptive convexity tester with query complexity $O(\frac{\log s}{\eps})$, and complement it with a nearly matching lower bound of $\Omega(\frac{\log (\eps s)}{\eps})$.
Our work is motivated by the work of Pallavoor, Raskhodnikova and Varma~\cite{PallavoorRV18} who introduced the notion of parameterization in the setting of sublinear algorithms. 

Our results bring out the fine-grained complexity of the problem of convexity testing.
In particular, $s \leq n$ always and therefore, our tester is at least as efficient as the state of the art convexity testers. 
Furthermore, the parameterization that we introduce, enables us to circumvent the worst case lower bounds expressed in terms of the input size $n$ and obtain more efficient algorithms when $s << n$.

\subsection{Our Results}

We begin our investigation with the simple and highly restricted case of testing convexity of functions $f:[n] \to \mathbb{R}$ having at most two distinct discrete derivatives. 
We design an adaptive algorithm that exactly decides convexity by making $5$ queries and a nonadaptive
algorithm that $\eps$-tests convexity by making $O(1/\eps)$ queries.
The highlight is that both these algorithms are deterministic.

\begin{theorem}\label{thm:convexity-2-deriv-test-adaptive}
There exists a \textbf{deterministic} algorithm that, given oracle access to a function $f:[n] \to \mathbb{R}$ having at most $2$ distinct discrete derivatives, exactly decides convexity by making at most $5$ adaptive queries.
\end{theorem}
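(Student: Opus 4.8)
The plan is to reduce convexity to monotonicity of the discrete derivative sequence and then exploit the two--value promise through prefix sums. Write $d_i = f(i+1)-f(i)$ for $i \in [n-1]$. By the definition of convexity, $f$ is convex iff $d_1 \le d_2 \le \cdots \le d_{n-1}$. Under the promise that the $d_i$ take at most two distinct values, say $a < b$, convexity is equivalent to the derivative sequence having the block form $a^p b^q$ with $p+q = n-1$: all occurrences of the smaller slope must precede all occurrences of the larger one. The single most useful observation is that once $f(1)$ and both slope values $a,b$ are known, one evaluation of $f$ reveals a global count. Indeed, $f(x)-f(1) = \sum_{i<x} d_i$, so if exactly $N$ of the first $x-1$ derivatives equal $b$ then $f(x)-f(1) = (x-1)a + N(b-a)$, whence $N = \frac{f(x)-f(1)-(x-1)a}{b-a}$. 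Thus a single query at $x$ counts how many $b$-slopes occur among $d_1,\dots,d_{x-1}$.

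The algorithm I would run first queries $f(1), f(2), f(n-1), f(n)$ (four queries), yielding the endpoint derivatives $d_1 = f(2)-f(1)$ and $d_{n-1} = f(n)-f(n-1)$ together with the total rise $f(n)-f(1)$. It then splits into three cases. If $d_1 > d_{n-1}$, monotonicity already fails and I reject. If $d_1 = d_{n-1} = c$, I accept iff $f(n)-f(1) = (n-1)c$. If $d_1 < d_{n-1}$, then $a := d_1$ and $b := d_{n-1}$ are necessarily the two slope values (the promise forbids a third), so I compute the total number of $b$-slopes $q = \frac{(f(n)-f(1)) - (n-1)a}{b-a}$, set $p := (n-1)-q$, and make the fifth and final query $f(p+1)$; I accept iff the count $N = \frac{f(p+1)-f(1)-pa}{b-a}$ of $b$-slopes among $d_1,\dots,d_p$ equals zero. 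Here $d_1 = a$ and $d_{n-1} = b$ force $1 \le p \le n-2$, so $p+1 \in \{2,\dots,n-1\}$ is a legal index, and the algorithm makes at most five adaptive queries.

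For correctness of the third case: $q$ is the true total number of $b$-slopes, so convexity forces the $a$-slopes to occupy exactly the prefix $\{1,\dots,p\}$. The test $N = 0$ certifies that this prefix contains no $b$, and since all $q$ of the $b$-slopes must then lie in the length-$q$ suffix $\{p+1,\dots,n-1\}$, the sequence is exactly $a^p b^q$ and $f$ is convex; conversely $N > 0$ exhibits a $b$-slope occurring before some $a$-slope and refutes convexity.

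The step I expect to be the real obstacle is the middle case $d_1 = d_{n-1}$, where only one slope has been observed and a naive approach would fear a second slope hidden somewhere in the interior, which in the worst case would require many queries to locate. The resolution is the averaging identity: if a second slope $b \ne c$ occurred $q \ge 1$ times, then $f(n)-f(1) = (n-1)c + q(b-c) \ne (n-1)c$, so comparing the already-known total rise to $(n-1)c$ decides whether $f$ is linear using no further queries. This settles convexity, since a nondecreasing sequence with equal first and last terms must be constant, so any nonconstant two-slope function with $d_1 = d_{n-1}$ fails convexity, while a linear function is trivially convex. Tallying the three cases shows the algorithm is deterministic, always correct, and uses at most five adaptive queries.
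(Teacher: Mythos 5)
Your proposal is correct and is essentially the paper's proof in different clothing: both query $f(1),f(2),f(n-1),f(n)$ to pin down the candidate slopes $r_1=d_1$ and $r_2=d_{n-1}$, then spend one adaptive query at the unique candidate breakpoint and accept iff the value there matches, and indeed your index $p+1$ (derived from the prefix-sum count $q=\frac{(f(n)-f(1))-(n-1)a}{b-a}$) is exactly the crossing point $j$ where the paper's lines $f_1$ and $f_2$ meet, with the same final argument that a matching fifth value forces all prefix derivatives to equal $a$ and all suffix derivatives to equal $b$. Your explicit three-way case split (in particular handling $d_1=d_{n-1}$ via the total-rise identity with no fifth query) is a slightly cleaner bookkeeping of the degenerate parallel-lines situation that the paper folds into its four-point convexity check, but it is not a different method.
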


Theorem~\ref{thm:convexity-2-deriv-test-adaptive} is significant because one can construct simple examples of two distributions, both over functions having at most $2$ distinct discrete derivatives, one over convex functions and the other over non-convex functions, such that no nonadaptive deterministic algorithm making $o(n)$ queries can distinguish the functions. Therefore, the above result shows the power of adaptivity even in this restricted setting.

We also design a constant-query deterministic nonadaptive testing algorithm for convexity of functions $f:[n] \to \mathbb{R}$ having at most $2$ distinct discrete derivatives.

\begin{theorem}\label{thm:convexity-2-deriv-test-nonadaptive}
Let $\varepsilon \in (0,1)$. There exists a \textbf{deterministic nonadaptive} $1$-sided error $\eps$-tester for convexity of functions $f:[n] \to \mathbb{R}$ having at most $2$ distinct discrete derivatives with query complexity $O(1/\varepsilon)$.
\end{theorem}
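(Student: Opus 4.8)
The plan is to understand the structure of functions with at most two distinct discrete derivatives, say $a$ and $b$ with $a \neq b$. For such a function, the discrete derivative sequence $f(2)-f(1), f(3)-f(2), \dots, f(n)-f(n-1)$ is a string over the two-symbol alphabet $\{a,b\}$. Convexity is precisely the condition that this derivative sequence is nondecreasing, i.e., once it switches from the larger value to the smaller it must never increase. Writing $a < b$ without loss of generality, the function is convex if and only if the derivative string consists of a (possibly empty) block of $a$'s followed by a (possibly empty) block of $b$'s, that is, it has the form $a^i b^j$. So the non-convex instances are exactly those whose derivative string contains an occurrence of the pattern $b$ followed later by an $a$: a "descent" from the high derivative to the low derivative.

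First I would reduce $\eps$-testing convexity to $\eps$-testing whether the derivative string is monotone (sorted) in the two-symbol setting. The key quantitative step is to relate distance to convexity (fraction of function values one must change) to the structure of descents in the derivative string. I expect that a function that is $\eps$-far from convex must have its derivative string $\eps'$-far from monotone for some $\eps'$ comparable to $\eps$; changing a single value $f(x)$ alters only the two adjacent derivatives $f(x)-f(x-1)$ and $f(x+1)-f(x)$, so edits to the function and edits to the derivative string are tightly linked. I would make this relationship precise, being careful that modifying a function value is allowed to set the new derivatives to arbitrary reals, not just to $a$ or $b$, which is what makes the distance to convexity potentially smaller than the naive sortedness distance and must be handled.

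The core algorithmic idea is then to exploit the two-valued structure to test monotonicity of a binary string with $O(1/\eps)$ nonadaptive queries deterministically. Because there are only two derivative values, a violation is witnessed by a pair of indices $x < y$ with derivative $b$ at $x$ and derivative $a$ at $y$; equivalently, after identifying the threshold position where the sorted string would switch from $a$ to $b$, every out-of-place symbol contributes to the distance. I would design a fixed (deterministic, query-independent) set of $O(1/\eps)$ sampling positions — for instance an evenly spaced grid of indices together with their neighbors so that the relevant discrete derivatives can be computed — such that if the function is $\eps$-far from convex, the grid is guaranteed to hit a $b$-before-$a$ inversion among the sampled derivatives. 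One-sided error follows because convex inputs have a sorted derivative string and hence present no inversion on any query set, so the tester never rejects a convex function.

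The main obstacle I anticipate is twofold. First, establishing the tight distance relationship: I must show that any function with two discrete derivatives that is $\eps$-far from convex necessarily has $\Omega(\eps n)$ derivative inversions that are "spread out" enough to be caught by a deterministic grid, rather than concentrated in a way that a fixed nonrandom sample could miss. Because the algorithm is deterministic and nonadaptive, I cannot rely on random sampling to hit a sparse violated region, so I must prove that $\eps$-farness forces a violation structure dense enough — essentially a long descent or many descents at a guaranteed spacing — that an evenly spaced deterministic grid of $O(1/\eps)$ points provably detects it. Resolving this combinatorial guarantee, and then verifying that all pieces combine into one-sided error with the claimed $O(1/\eps)$ query bound, is where the real work lies.
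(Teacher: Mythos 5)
Your plan hinges on the claim that if $f$ is $\eps$-far from convex then its derivative string $\Delta_f$ is $\eps'$-far from monotone for some $\eps'$ comparable to $\eps$, so that a fixed grid of $O(1/\eps)$ positions-with-neighbors must hit a $b$-before-$a$ inversion among the \emph{sampled derivatives}. This claim is false, and the paper states the counterexample explicitly in its Preliminaries: take $f(i)=i$ for $i\in[k]$, $f(k+1)=k$, and $f(j)=j-1$ for $j\ge k+2$, with $k=n/2$. This $f$ has exactly two distinct derivatives, $\{0,1\}$, and is nearly $1/2$-far from convex (the single dip forces either the whole prefix or the whole suffix to be rewritten), yet $\Delta_f$ differs from the sorted all-ones string in a \emph{single} position. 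Any fixed set of $o(n)$ derivative positions misses that one position; every sampled derivative equals $1$, so a tester that rejects only upon seeing a high-derivative-before-low-derivative inversion among sampled derivatives accepts this far function. Note that your caveat points in the wrong direction: you worry the distance to convexity might be \emph{smaller} than the sortedness distance of $\Delta_f$, but the fatal phenomenon is the opposite --- it can be a factor $\Theta(n)$ \emph{larger}. Nor does counting inversion pairs help: the example has $\Theta(n)$ inversion pairs, but they all share the one misplaced symbol, so witnessing any of them as a derivative inversion requires querying that specific position, which no deterministic nonadaptive grid of size $O(1/\eps)$ can guarantee.

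What rescues the grid is comparing \emph{slopes between} queried points rather than sampling the derivative string: in the example above, any three queried points straddling the dip already violate convexity, since the average slope drops from $1$ to strictly below $1$. The paper's proof goes further and exploits the rigid global structure of the promise class (its Observation~3.1): a convex function with at most two derivatives is a line of slope $r_1=f(2)-f(1)$ followed by a line of slope $r_2=f(n)-f(n-1)$. Its tester queries $f(1),f(2),f(n-1),f(n)$ and a grid of $1/\eps$ points (each with a neighbor), rejects if $r_1>r_2$, and checks that every grid point lies on the first line up to some index $j$ and on the second line thereafter. Soundness then uses the promise as a global forcing argument: if $f(x_j)=f(1)+r_1(x_j-1)$ and all derivatives lie in a two-element set of which $r_1$ is an extreme value, then the average of the derivatives over $[1,x_j]$ equaling that extreme forces \emph{every} derivative on the prefix to equal $r_1$, so agreement at one grid point implies exact linearity on the entire unqueried prefix (symmetrically for the suffix), leaving only the single gap of at most $\eps n$ points between consecutive grid points to be corrected. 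Your proposal is missing exactly this forcing step, and the monotonicity-of-$\Delta_f$ reduction cannot be repaired without abandoning it in favor of slope comparisons across the grid.
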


Next, we consider the general case of functions $f:[n] \to \mathbb{R}$ having at most $s$ distinct discrete derivatives and design the following nonadaptive tester.

\begin{theorem}\label{thm:convexity-r-deriv-test}
Let $\varepsilon \in (0,1)$. There exists a nonadaptive $1$-sided error $\varepsilon$-tester with query complexity $O(\frac{\log s}{\varepsilon})$ for convexity of real-valued functions $f:[n] \to \mathbb{R}$ having at most $s$ distinct discrete derivatives.
\end{theorem}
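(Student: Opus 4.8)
The plan is to work with the discrete derivative $g(x) = f(x+1)-f(x)$ for $x \in [n-1]$ and exploit the equivalence that $f$ is convex if and only if $g$ is non-decreasing. Since $f$ has at most $s$ distinct discrete derivatives, $g$ is a function $[n-1]\to V$ with $|V|\le s$, and every value $g(x)$ can be recovered with two queries to $f$. In this language the target bound $O(\tfrac{\log s}{\varepsilon})$ is exactly the range-parameterized monotonicity bound introduced by Pallavoor, Raskhodnikova and Varma~\cite{PallavoorRV18}, so the natural approach is to design a recursive tester whose depth is $O(\log s)$ by binary searching over the sorted derivative values $v_1 < \dots < v_s$: at a threshold $v$ one tests, using $O(1/\varepsilon)$ samples, whether the high-slope region $\{x : g(x)\ge v\}$ behaves like a suffix, equivalently whether the average slopes on the two sides of a split are correctly ordered. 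Branching left or right according to the observed slopes yields $O(\log s)$ levels, and $O(1/\varepsilon)$ anchors refined through all levels give the claimed $O(\tfrac{\log s}{\varepsilon})$ queries.

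First I would settle completeness and one-sidedness, which are the easy steps. If $f$ is convex then $g$ is non-decreasing and every comparison the tester performs is satisfied, so a convex function is always accepted. Conversely, each individual check that the tester can fail is, by construction, witnessed by three domain points $x<y<z$ with $\tfrac{f(y)-f(x)}{y-x} > \tfrac{f(z)-f(y)}{z-y}$, i.e.\ an explicit certificate of non-convexity; hence the tester never rejects a convex function and has one-sided error. It remains only to fix the exact sampling distribution over anchors and thresholds so that the event ``some check fails'' is well defined and the query accounting gives $O(\tfrac{\log s}{\varepsilon})$.

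The hard part, and the step I expect to be the main obstacle, is soundness. The tempting shortcut of invoking a monotonicity tester for $g$ as a black box fails, because the Hamming distance of $f$ to convexity is genuinely different from the distance of $g$ to monotonicity: a single misplaced large derivative (think of the step function $f=(0^{n/2},c^{n/2})$, whose $g$ is a single spike and is $O(1/n)$-close to monotone) can force $f$ to be $\Omega(1)$-far from convex. Consequently I cannot transfer distance through the reduction and must argue directly about $f$. Concretely, the distance of $f$ to convexity equals $n$ minus the length of the longest convex subsequence of $(f(1),\dots,f(n))$, so $\varepsilon$-farness yields a packing of $\Omega(\varepsilon n)$ ``deletions,'' and I would assign each to the threshold level at which the binary search first exposes the corresponding slope inversion. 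The crux is a structural lemma asserting that, because $g$ attains at most $s$ values, this violation mass is supported on only $O(\log s)$ levels, rather than the $O(\log n)$ position-scales that a naive tree tester would charge, so that the per-level cost $O(1/\varepsilon)$ multiplies out to $O(\tfrac{\log s}{\varepsilon})$. Proving that the bounded number of distinct derivatives caps the number of essential levels at $O(\log s)$, and that the anchor sampling indeed hits a violated level with constant probability, is where the real work lies.
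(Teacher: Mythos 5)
Your proposal is a plan rather than a proof, and the gap is the one you yourself flag: the ``structural lemma'' asserting that the violation mass of a function $\varepsilon$-far from convex is supported on $O(\log s)$ value-threshold levels of the binary search is never proved, and it is precisely the soundness argument, i.e.\ the entire content of the theorem beyond the easy completeness direction. Worse, the search structure you propose is problematic in itself: the tester does not know the derivative values $v_1 < \dots < v_s$ (it is only promised an upper bound $s$ on their number), so the thresholds over which you binary search are not available in advance, and ``branching left or right according to the observed slopes'' makes the query sequence depend on previous answers --- that is an \emph{adaptive} tester, whereas the theorem demands a nonadaptive one. In the position-based tree tester one can dequery adaptivity because the root-to-leaf path is determined by the sampled index alone; in a value-based search the path is determined by function values, so this fix is unavailable. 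You correctly identify the distance-transfer obstruction (the spike example showing that farness of $f$ from convexity does not translate into farness of $\Delta_f$ from monotonicity, which the paper also notes), and your characterization of the distance to convexity as $n$ minus the longest convex subsequence is valid since a convex restriction to a subset always extends to a convex function on $[n]$ by piecewise-linear interpolation; but neither observation substitutes for the missing lemma, and it is far from clear the lemma is even true in the form stated, since the $O(\log s)$ levels of a value search have no a priori relation to the positions where deletions from a longest convex subsequence occur.

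For comparison, the paper avoids value search entirely. It fixes in advance a skeleton $B$ of $\Theta(s/\varepsilon)$ equally spaced \emph{pairs} of consecutive points and uses a pigeonhole fact: a convex function with at most $s$ distinct derivatives must be linear on all but at most $s$ of the $\Theta(s/\varepsilon)$ intervals between skeleton points. The tester then (i) tests convexity of $f|_B$ nonadaptively with the basic tester of Belovs, Blais and Bommireddi~\cite{BelovsBB20}, costing $O(\log(\varepsilon|B|)) = O(\log s)$ queries per basic round, and (ii) samples one uniform point $x$ and checks, with the four surrounding skeleton points, that $f(x)$ lies on the interpolating line of its interval. The soundness analysis charges the $\varepsilon n$ distance budget to bad skeleton points, to the at most $s-1$ good intervals that are not nearly linear, and to deviations from linearity inside good nearly linear intervals, the last of which step (ii) catches with probability $\Omega(\varepsilon)$; repeating $O(1/\varepsilon)$ times gives the theorem. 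All query positions are determined by the random coins alone, which is how nonadaptivity is achieved --- the ingredient your route is missing.
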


We complement Theorem~\ref{thm:convexity-r-deriv-test} with the following lower bound that is tight for constant $\eps \in (0,1)$. The bound holds even for adaptive testers, thereby showing that one cannot hope for a separation between adaptive and nonadaptive testers for this general setting.

\begin{theorem}\label{thm:lower-bound}
For every sufficiently large $s \in \mathbb{N}$, every $\eps \in [1/s, 1/9]$, and for every sufficiently large $n \ge s$, every $\eps$-tester for convexity of functions $f:[n] \to \mathbb{R}$ having at most $s$ distinct derivatives has query complexity $\Omega\left(\frac{\log (\eps s)}{\eps}\right)$.
\end{theorem}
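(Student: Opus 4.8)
The plan is to prove the bound via Yao's minimax principle: I will exhibit two distributions over functions $f:[n]\to\mathbb{R}$, each supported on functions with at most $s$ distinct discrete derivatives, such that $\mathcal{D}_{\mathrm{yes}}$ is supported on convex functions, $\mathcal{D}_{\mathrm{no}}$ is supported on functions that are $\eps$-far from convex, and no deterministic adaptive algorithm issuing $q=o(\log(\eps s)/\eps)$ queries can distinguish a random draw from $\mathcal{D}_{\mathrm{yes}}$ from one from $\mathcal{D}_{\mathrm{no}}$ with advantage exceeding $1/3$. Since the bound must hold against adaptive testers, I will design the two distributions so that, conditioned on any transcript of fewer than $q$ queries, the induced distributions on the next answer are statistically close; this lets a standard hybrid/coupling argument over the query sequence absorb adaptivity.

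The construction adapts the multiscale hard distribution of Belovs, Blais, and Bommireddi~\cite{BelovsBB20}, whose worst-case bound is $\Omega(\log(\eps n)/\eps)$. The right viewpoint is the discrete derivative sequence $g(x)=f(x+1)-f(x)$: convexity is equivalent to $g$ being non-decreasing, and an algorithm that queries $f$ at a set $S$ learns only the average of $g$ over each gap between consecutive points of $S$, so a convexity violation is detected only when three queried points exhibit a decreasing pair of consecutive average slopes. Accordingly, $\mathcal{D}_{\mathrm{no}}$ plants \emph{average-slope inversions} at a single scale $2^{i^{*}}$, with $i^{*}$ drawn uniformly from $\{1,\dots,L\}$, spreading the planted sites over an $\eps$-fraction of the domain; $\mathcal{D}_{\mathrm{yes}}$ plants none. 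The new ingredient, and the reason the bound is governed by $s$ rather than $n$, is that hiding an inversion at horizontal scale $2^{i}$ from all coarser scales consumes a corresponding amount of \emph{vertical} resolution in the slope values, and the budget of at most $s$ distinct derivatives caps the number of independently hideable scales at $L=\Theta(\log(\eps s))$ rather than $\Theta(\log(\eps n))$. I will choose the $L$ slope ranges to be disjoint and increasing across the domain so that seams are automatically convex and the whole instance is convex precisely when no inversion is planted.

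Two facts then need to be established. First, $\eps$-farness: I will show that the planted inversions at the active scale force $\Omega(\eps n)$ coordinate changes to restore monotonicity of $g$, via a charging argument against any convex correction. Second, indistinguishability: conditioned on the hidden scale $i^{*}$, a single queried triple reveals the planted inversion only if its gaps match $2^{i^{*}}$ up to constants \emph{and} it straddles a planted site, an event of probability $O(\eps)$; moreover, triples aligned to one scale are uninformative for any other scale. A hybrid argument over the $L$ scales, together with this per-scale detection bound, should show that a budget of $q=o(L/\eps)$ queries leaves at least one scale under-sampled, and that conditioned on $i^{*}$ equal to such a scale the two distributions are indistinguishable, yielding $q=\Omega(L/\eps)=\Omega(\log(\eps s)/\eps)$.

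I expect the main obstacle to be the indistinguishability analysis, where the two sources of hardness --- the unknown scale (contributing the $\log(\eps s)$ factor) and the $\eps$-sparse placement of planted sites (contributing the $1/\eps$ factor) --- must be combined \emph{multiplicatively} rather than additively, while simultaneously (i) respecting adaptivity, so that the algorithm cannot use coarse queries to adaptively localize the active scale and thereby avoid paying per scale, and (ii) respecting the averaging access model, in which each query reveals only block sums of $g$ rather than individual derivatives. A secondary but delicate point is the accounting that keeps the instance within the budget of $s$ distinct derivatives while remaining $\eps$-far and supporting all $L=\Theta(\log(\eps s))$ scales; in particular, the naive reduction that shrinks the domain to $[s]$, invokes \cite{BelovsBB20} there, and then stretches each point into a run of $n/s$ points fails, because stretching dilutes the distance to a $\Theta(\eps s/n)$ fraction and because distinct copies cannot reuse the same few derivative values while keeping the convex (yes) instances globally convex. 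This is precisely what forces the direct multiscale construction above.
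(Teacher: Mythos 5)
Your proposal, as written, has a genuine gap: it is a construction plan whose two load-bearing claims are explicitly deferred rather than proved. The farness of the planted multiscale distribution (the charging argument) and the adaptive indistinguishability analysis in which the $\log(\eps s)$ and $1/\eps$ factors combine multiplicatively are precisely the hard content of the Belovs--Blais--Bommireddi argument that you would be re-deriving; you name both as open obstacles. Likewise, the central new claim --- that a budget of $s$ distinct derivatives caps the number of independently hideable scales at $L=\Theta(\log(\eps s))$ --- is supported only by the heuristic ``vertical resolution'' remark, with no accounting of how many distinct slope values each scale consumes while keeping seams convex and the no-instances $\eps$-far. Until those three pieces are carried out, there is no proof.

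More importantly, the reason you give for taking this hard route is mistaken, and the route you dismiss is exactly the paper's proof. The paper takes the hard distributions $\mathcal{D}_0,\mathcal{D}_1$ of \cite{BelovsBB20} over $[s]$ (any function on $[s]$ trivially has at most $s$ distinct derivatives) and stretches each $g$ to $f:[n]\to\mathbb{R}$ by piecewise \emph{linear interpolation}: with $k=n/s$, set $f((i-1)k+1)=g(i)$ and give the $i$-th block the constant slope $(g(i+1)-g(i))/k$. Your objection --- that stretching dilutes the distance to a $\Theta(\eps s/n)$ fraction and cannot reuse few derivative values while keeping yes-instances convex --- applies to the variant that replaces each point by a constant run, not to interpolation. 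Interpolation keeps yes-instances convex (the block slopes are nondecreasing), uses only $s-1$ derivative values, and preserves relative distance: if $f$ were $\eps$-close via a bad set $B$, then each corrected anchor point forces an entire adjacent block of $k-1$ points into $B$ (otherwise unchanged points in both neighboring blocks pin down the original violating secant slopes), so at most $O(|B|/k)=O(\eps s)$ anchors are corrected; restricting the convex repair to the equally spaced anchors then yields a convex function within $O(\eps s)$ of $g$, contradicting the farness of $g$. Finally, any $q$-query tester on $[n]$ simulates as a $2q$-query distinguisher on $[s]$, since every value of $f$ is a known affine combination of two anchor values, and adaptivity is preserved by the simulation. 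So the reduction you reject in your closing paragraph is, once done with interpolation rather than runs, both sound and far shorter than the direct multiscale construction you sketch.
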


\subsection{Related Work}
The study of property testing was initiated by Rubinfeld and Sudan~\cite{RubinfeldS96} and Goldreich, Goldwasser and Ron~\cite{GoldreichGR98}.
The first example where parameterization has helped in the design of efficient testers is the work of Jha and Raskhodnikova~\cite{JhaR13} on testing the Lipschitz property. A systematic study of parameterization in sublinear-time algorithms was initiated by Pallavoor, Raskhodnikova and Varma~\cite{PallavoorRV18} and studied further by~\cite{Belovs18,ChakrabartyS19,rameshThesis,NewmanV21}.  

In this work, we are concerned only with convexity of real-valued functions over a $1D$ domain.
We would like to note that not much is known about testing convexity of functions over higher dimensional domains. One possible reason behind this could be the following: there is no single definition of discrete convexity for real-valued functions of multiple variables. For a good overview of this topic, we refer interested readers to the textbook by Murota on discrete convex analysis~\cite{Murota}. 
Ben Eliezer~\cite{Ben-Eliezer19}, in his work on local properties, studied the problem of testing convexity of functions of the form $f:[n]^2 \to \mathbb{R}$ and designed a nonadaptive tester with query complexity $O(n)$. 
Later, Belovs, Blais and Bommireddi~\cite{BelovsBB20} showed a nonadaptive query lower bound of $\Omega(\frac{n}{d})^{\frac{d}{2}}$ for testing convexity of real-valued functions over $[n]^d$. For functions of the form $f:[3] \times [n] \to \mathbb{R}$, they design an adaptive tester with query complexity $O(\log^2 n)$ and show that the complexity of nonadaptive testing is $O(\sqrt{n})$.

\section{Preliminaries}
For a natural number $n \in \mathbb{N}$, we denote by $[n]$, the set $\{1,2,\dots, n\}$. 
Let $B \subseteq [n]$. For $x \in B$, the $\ord(x)$ is the number of points $y \in B$ such that $y \le x$. 
Points $x,y \in B$ are \emph{consecutive} if $|\ord(x) - \ord(y)| = 1$.
A function $f:B \to \mathbb{R}$ is \emph{convex} if and only if  
\begin{equation}
\label{eq:convex}
\frac{f(y) - f(x)}{y - x} \le \frac{f(z) - f(y)}{z - y}
\end{equation}
for all $x, y, z \in B$ such that $x < y < z$.
A set of points $V \subseteq B$ is said to \emph{violate} convexity if $\restr{f}{V}$ is not convex.

For a function $f:[n] \to \mathbb{R}$, and $i \in [n]$, we use the term `discrete derivative' at $i$ for $\Delta(f,i) = f(i+1) - f(i)$. We denote by $\Delta_f: [n-1] \to \mathbb{R}$, the derivative function $\Delta_f(i) = \Delta(f,i), ~ i=1, \ldots ,n-1$. 
The cardinality of the range of $\Delta_f$ is referred to as the \emph{number of distinct discrete derivatives} of $f$.
%\begin{fact}
A function $f:[n] \to \mathbb{R}$ is convex if and only if $\Delta_f$ is monotone non-decreasing.
%\end{fact}

\begin{fact}
If $f:B \to \mathbb{R}$ is not convex, then there exists three consecutive points $x, y, z \in B$ that violate~\cref{eq:convex}.
\end{fact}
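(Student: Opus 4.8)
The plan is to prove the contrapositive: if no triple of three consecutive points of $B$ violates \cref{eq:convex}, then $f$ is convex. First I would fix notation by listing the points of $B$ in increasing order as $a_1 < a_2 < \dots < a_m$ (so that $\ord(a_i) = i$), and, for each $i \in [m-1]$, define the consecutive slope $\sigma_i = \frac{f(a_{i+1}) - f(a_i)}{a_{i+1} - a_i}$. The key observation is that the three consecutive points $a_{i-1}, a_i, a_{i+1}$ satisfy \cref{eq:convex} precisely when $\sigma_{i-1} \le \sigma_i$. Hence the hypothesis of the contrapositive is exactly that the slope sequence $\sigma_1, \sigma_2, \dots, \sigma_{m-1}$ is monotone non-decreasing, and it remains to deduce from this that \cref{eq:convex} holds for all (not necessarily consecutive) triples $x < y < z$.

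The main technical step is a mediant/weighted-average lemma: for any $i < j$, the slope $\frac{f(a_j) - f(a_i)}{a_j - a_i}$ is a convex combination of the consecutive slopes $\sigma_i, \sigma_{i+1}, \dots, \sigma_{j-1}$. This follows by telescoping, writing $f(a_j) - f(a_i) = \sum_{k=i}^{j-1} \sigma_k (a_{k+1} - a_k)$ and $a_j - a_i = \sum_{k=i}^{j-1}(a_{k+1} - a_k)$; since the weights $a_{k+1} - a_k$ are strictly positive, the quotient lies between $\min_{i \le k < j} \sigma_k$ and $\max_{i \le k < j} \sigma_k$. In particular, the slope between any two points is at most the largest consecutive slope in the corresponding range and at least the smallest.

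To finish, I would take an arbitrary triple $x = a_p < y = a_q < z = a_r$ with $p < q < r$. By the weighted-average lemma together with monotonicity of the $\sigma_i$, the left slope satisfies $\frac{f(y) - f(x)}{y - x} \le \max_{p \le k < q} \sigma_k = \sigma_{q-1}$, while the right slope satisfies $\frac{f(z) - f(y)}{z - y} \ge \min_{q \le k < r} \sigma_k = \sigma_q$. Since $\sigma_{q-1} \le \sigma_q$ by the assumed monotonicity, chaining these yields $\frac{f(y)-f(x)}{y-x} \le \sigma_{q-1} \le \sigma_q \le \frac{f(z)-f(y)}{z-y}$, which is exactly \cref{eq:convex} for the triple $x, y, z$; this establishes convexity and completes the contrapositive.

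I do not expect a genuine obstacle here: the statement is a discrete analogue of the classical fact that a function with non-decreasing one-step slopes is convex. The only points requiring care are bookkeeping — correctly identifying that a consecutive triple's inequality is literally the comparison $\sigma_{i-1} \le \sigma_i$, and ensuring the weighted-average bounds pick out exactly the boundary slopes $\sigma_{q-1}$ and $\sigma_q$ at the shared point $y$, so that monotonicity need only be invoked at a single adjacent pair.
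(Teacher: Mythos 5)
Your proof is correct. The paper actually states this as a \texttt{fact} environment with no proof at all, treating it as standard; your argument supplies exactly the canonical justification. The reduction to the contrapositive, the identification of the consecutive-triple condition with monotonicity of the one-step slopes $\sigma_i$, and the telescoping/weighted-average lemma showing $\frac{f(a_j)-f(a_i)}{a_j-a_i}$ lies between $\min_{i\le k<j}\sigma_k$ and $\max_{i\le k<j}\sigma_k$ are all sound, and the final chain $\frac{f(y)-f(x)}{y-x}\le\sigma_{q-1}\le\sigma_q\le\frac{f(z)-f(y)}{z-y}$ correctly needs monotonicity only at the single adjacent pair straddling $y$. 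The one degenerate case worth a word is $|B|<3$, where there are no triples and convexity holds vacuously, which your argument covers implicitly.
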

% We call the values in the set $S = \{(f(y) - f(x))/(y-x): x, y \in B,~ \ord(x) = \ord(y) - 1\}$, the \emph{discrete derivatives} of $f$. 

We  note that although convexity of $f$ is equivalent to monotonicity
of $\Delta_f$, it is not true that if $f$ is $\eps$-far from being
convex then $\Delta_f$ is $\nu$-far from monotonicity for some
positive constant $\nu$. E.g., consider $f$ that is defined by
$f(i)=i, ~ i\in [k]$, $f(k+1)=k$ and $f(j) = j-1, ~ j= k+2, \ldots
n$. Then $f$ if nearly $\frac{1}{2}$-far from being convex for $k =
  n/2$, while $\Delta_f$ is almost the $1$-constant function.

Let $\eps \in (0,1)$.
A function $f:[n] \to \mathbb{R}$ is $\eps$-far from convex if every convex function evaluates differently from $f$ on at least $\eps n$ points.
A \emph{basic $\eps$-tester} for convexity gets oracle access to a function $f$, a parameter $\eps$, and is such that, it
\textbf{accepts} if $f$ is convex,
    and \textbf{rejects}, with probability at least $\eps$, if $f$ is $\eps$-far from convex. 

\section{Deterministic Convexity Testers for Functions having at most $2$ Distinct Discrete Derivatives}
\label{sec:convextest}

We start with the very simple case in which $f$ has only $2$ distinct
derivatives (if $f$ has only $1$ distinct derivative then $f$ is
a degree $1$ function and, in particular, convex) and prove Theorem~\ref{thm:convexity-2-deriv-test-adaptive} and Theorem~\ref{thm:convexity-2-deriv-test-nonadaptive}.

 Let the range of $\Delta_f$
be $\{r_1 < r_2\}$. We do not assume that the algorithm knows the values $r_1,
r_2$ but we will refer to them in our proofs and reasoning below. As
it turns out, in this case there is a deterministic adaptive algorithm
that can precisely decide if $f$ is convex, making only $5$
queries. This is based on the fact that the class of convex functions
having at most two distinct derivatives is very restricted.

  \begin{observation}
    \label{obs:0.2}
    If $f$ is convex and $\Delta_f$ takes at most two values $\{r_1 < r_2\}$ then $f$ is
    of the following form: there is $j \in [n]$ such that  $f(i) =a +
    r_1 (i-1), i=1, \ldots , j$ and $f(i) = a + r_1(j-1) + r_2 (i-j),~
    i=j+1, \ldots n$ for some $r_1 < r_2$ and some $a$.  
    We denote such $f$ as $f_{a,r_1,j,r_2}$.
  \end{observation}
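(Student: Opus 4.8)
The plan is to use the characterization that $f$ is convex if and only if $\Delta_f$ is monotone non-decreasing, which is stated in the preliminaries. Since $\Delta_f$ takes values only in $\{r_1 < r_2\}$, a monotone non-decreasing $\Delta_f$ must take the value $r_1$ on an initial segment of its domain $[n-1]$ and the value $r_2$ on the complementary final segment. Formally, I would define $j$ to be the largest index in $[n-1]$ with $\Delta_f(j-1) = r_1$ (equivalently, the threshold where $\Delta_f$ switches from $r_1$ to $r_2$), with appropriate conventions at the boundaries so that $j \in [n]$ captures both the case where $\Delta_f \equiv r_2$ (then $j=1$) and any interior switch point. The monotonicity of $\Delta_f$ guarantees that exactly one such switch occurs: once $\Delta_f$ attains $r_2$ it can never return to the smaller value $r_1$.

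Having fixed $j$, I would reconstruct $f$ by telescoping. Setting $a = f(1)$, for any $i \le j$ we have $f(i) = f(1) + \sum_{k=1}^{i-1} \Delta_f(k) = a + r_1(i-1)$, since every derivative in this range equals $r_1$. For $i > j$, the sum splits: the first $j-1$ derivatives contribute $r_1(j-1)$ and the remaining $i - j$ derivatives each contribute $r_2$, giving $f(i) = a + r_1(j-1) + r_2(i-j)$. This matches the claimed closed form $f_{a,r_1,j,r_2}$ exactly.

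The only subtle points are bookkeeping at the boundaries and the index conventions, rather than any real mathematical obstacle. I would need to verify that the definition of $j$ is consistent with the two displayed formulas agreeing at $i=j$ (both give $a + r_1(j-1)$), and to handle the degenerate cases where the switch happens at the very first or last derivative so that one of the two linear pieces is empty or a single point. Since $\Delta_f$ genuinely takes both values $r_1$ and $r_2$ by hypothesis, the switch index is well-defined in the interior, and the main content of the observation is simply the translation of "monotone $\{r_1,r_2\}$-valued derivative sequence" into "one threshold $j$." I expect the step requiring the most care to be stating the convention for $j$ cleanly enough that a single formula covers all positions of the switch, but this is a matter of careful indexing rather than a genuine difficulty.
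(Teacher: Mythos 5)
Your proposal is correct and is exactly the intended argument: the paper states Observation~\ref{obs:0.2} without proof, treating it as immediate from the equivalence (given in the preliminaries) between convexity of $f$ and monotonicity of $\Delta_f$, and your switch-index-plus-telescoping reconstruction is precisely the justification the authors have in mind. The only nuance worth noting is that the hypothesis says \emph{at most} two values, so the constant-derivative case must also be covered (e.g.\ by the convention $j=n$); your boundary conventions handle this, and the paper itself sidesteps it by remarking earlier that a function with a single distinct derivative is linear and hence convex.
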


  \ignore{%obs - although true
  \label{obs:1}
  Let $f$ be $\eps$-far from being convex and having at most $2$
  distinct derivatives, then there is a $j \in [\frac{\eps n}{2},
  (1- \frac{\epsilon}{2})n]$ such that $\Delta_f(j+1) < \Delta_f(j)$.
\end{observation}
Namely, the observation claims that there a violation to the
monotonicity of $\Delta_f$ in the interval $[\frac{\eps n}{2},
  (1- \frac{\eps}{2})n]$.
\begin{proof}
Let $I = [a,b]$ for $a \leq \frac{\eps n}{2}$ and $b \geq (1- \frac{\eps}{2})n,$ and assume, for the contrary, that for any $j \in I,$
 $\Delta_f(j+1) \geq \Delta_f(j)$. Define
  $f'(j) = f(j), ~ i \in I$ and $f'(i) = f(a), i \leq a$ and $f'(i) =
  f(b) + (i-b)\Delta_f(b-1)$. It is easy to see that $f'$ is convex
  while it is $\epsilon$-close to $f$.  
\end{proof}
}%ignore obs

Observation \ref{obs:0.2} suggests Algorithm~\ref{alg:determine-test} as a test for convexity.
\medskip 

\begin{algorithm}
\begin{algorithmic}[1]
\Require oracle access to function $f:[n] \to \mathbb{R}$ having at most $2$ distinct derivatives
\State Query $f(1), f(2), f(n-1), f(n)$.
\State Define $f_1(x) \coloneqq f(1) + (x-1)(f(2)- f(1))$ and $f_2(x) \coloneqq f(n) - (n-x)(f(n)-f(n-1))$.
\State Query $f(j)$ such that $f_1(j) = f_2(j)$.
\State \textbf{Reject} if the function restricted to $1, 2, j, n-1, n$ is not convex or $f(j) \neq f_1(j)$, and \textbf{accept} otherwise.
%\EndLoop
\end{algorithmic}
\label{alg:determine-test}
\caption{}
\end{algorithm}

\ignore{{\bf Test 1:}
Query $f(1), f(2), f(n), f(n-1)$ and also $f(j)$ for $j$ such that
$f_1(j) = f_2(j)$, where $f_1(x) = f(1) + (f(2)- f(1)) (x-1)$ and
$f_2(x) = f(n) - (f(n)-f(n-1)) (n-x)$. 
% (We remark in what follows what
% happens if there is no such integer $j$).  
Reject if there is no such integer $j$, or if $f$ restricted to
the queried points is not convex, and otherwise accept.}%ignore old algo

\begin{remark}
Algorithm~\ref{alg:determine-test} is a \emph{deterministic, adaptive} algorithm; it can make the last
query only after knowing the values of $f$ at the first $4$ points. 
Moreover, if $f$ is a function having at most $2$ distinct discrete derivatives, then there is always an integer point $j \in [n]$ such that $f_1(j) = f_2(j)$.
\end{remark}

\begin{lemma}
Algorithm~\ref{alg:determine-test} accepts every convex function having at most $2$ distinct
derivatives, and rejects every function having at most $2$ distinct
derivatives that is not convex. 
\end{lemma}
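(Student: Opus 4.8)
The plan is to characterize exactly when Algorithm~\ref{alg:determine-test} accepts and show this coincides with convexity. I would write $d_1 = \Delta_f(1) = f(2)-f(1)$ and $d_n = \Delta_f(n-1) = f(n)-f(n-1)$, so that $f_1$ is the line of slope $d_1$ through $(1,f(1))$ and $f_2$ is the line of slope $d_n$ through $(n,f(n))$, with $d_1,d_n \in \{r_1,r_2\}$. The single-derivative case (where $f$ is linear, hence convex) is immediate: then $d_1=d_n$ and $f_1\equiv f_2\equiv f$, so any queried $j$ satisfies $f(j)=f_1(j)$ and the restriction is convex, giving acceptance. So I would assume $f$ has exactly two derivatives.

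The heart of the argument is a telescoping analysis of the queried point $j$. Writing $f(j)=f(1)+\sum_{i<j}\Delta_f(i)$ and counting how many of the first $j-1$ derivatives equal $r_2$, one sees that $f(j)=f_1(j)$ can hold only if that count is $0$ or $j-1$, i.e.\ only if $\Delta_f(1),\ldots,\Delta_f(j-1)$ are all equal to $d_1$. Symmetrically, since $f_1(j)=f_2(j)$ by the choice of $j$, the equivalent equality $f(j)=f_2(j)$ forces $\Delta_f(j),\ldots,\Delta_f(n-1)$ to all equal $d_n$. Hence $f(j)=f_1(j)$ holds \emph{iff} the derivative sequence is exactly $d_1$ repeated $j-1$ times followed by $d_n$ repeated $n-j$ times. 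In that situation the points $1,2,j$ are colinear with slope $d_1$ and $j,n-1,n$ are colinear with slope $d_n$, so $\restr{f}{\{1,2,j,n-1,n\}}$ is convex iff $d_1\le d_n$, which is exactly the condition that the whole sequence $d_1^{\,j-1}d_n^{\,n-j}$ be non-decreasing, i.e.\ that $f$ be convex. Thus \emph{whenever the intersection point $j$ exists}, the two acceptance conditions together hold iff $f$ is convex.

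To close the soundness/completeness loop I still need that a convex $f$ produces an integer intersection point. By Observation~\ref{obs:0.2} a convex two-derivative function is $f_{a,r_1,j^*,r_2}$, so $d_1=r_1<r_2=d_n$, the lines are non-parallel, and solving $f_1(j)=f_2(j)$ via the same count gives $j=j^*=n-m$, where $m$ is the total number of derivatives equal to $r_2$; this is an integer in $\{2,\ldots,n-1\}$, and there $f_1$ agrees with $f$, so the test passes. More generally the same computation shows that whenever $d_1\ne d_n$ the solution of $f_1(j)=f_2(j)$ is automatically an integer in $\{2,\ldots,n-1\}$, so integrality is never an issue in the non-parallel regime.

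The main obstacle is the remaining case $d_1=d_n$ with $f$ non-convex. Here $f_1$ and $f_2$ are parallel, and because having two distinct derivatives forces $f(n)\ne f(1)+d_1(n-1)$, the two lines are distinct and never meet, so no integer $j$ exists and the algorithm rejects. I would argue this rejection is correct: a non-linear function whose first and last discrete derivatives are equal cannot have a non-decreasing derivative sequence and so cannot be convex. Folding this parallel case into the analysis — it is the only place where the intersection point fails to exist — together with the equivalence established above yields that Algorithm~\ref{alg:determine-test} accepts precisely the convex functions with at most two distinct derivatives, proving the lemma.
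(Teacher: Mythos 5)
Your proof is correct, and its core step is the same as the paper's: the telescoping observation that $f(j)=f_1(j)$ forces every derivative before $j$ to equal $d_1$ (and, via $f_1(j)=f_2(j)$, every derivative from $j$ onward to equal $d_n$) is exactly the paper's argument that ``each one of the discrete derivatives of $f$ up to $j$ must be $r_1$, for otherwise $f(j)$ will be larger than $f^*(j)$,'' and both directions lean on Observation~\ref{obs:0.2} for the structure of convex two-derivative functions. Where you diverge is in the bookkeeping around the intersection point $j$, and your version is actually tighter than the paper's on two counts. First, you explicitly verify that whenever $d_1\neq d_n$ the solution of $f_1(j)=f_2(j)$ is an integer in $\{2,\dots,n-1\}$ ($j=n-m$ or $j=m+1$ depending on which slope is which); the paper asserts existence of an integer $j$ only in a Remark, without computation. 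Second, you isolate the parallel case $d_1=d_n$ with $f$ non-linear, note that no intersection point exists there (so the Remark as stated in the paper is not literally true in this case), and justify rejection directly; the paper instead disposes of this case earlier, by observing that $\Delta_f(1)=r_1$ and $\Delta_f(n-1)=r_2$ are forced for the restriction to $\{1,2,n-1,n\}$ to be convex, so the four-point check already rejects before $j$ is ever needed. Either routing is sound, but your acceptance-characterization (``accept iff the derivative sequence is $d_1^{\,j-1}d_n^{\,n-j}$ with $d_1\le d_n$'') is the cleaner statement and patches the edge case the paper glosses over.
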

We note that the lemma asserts that Algorithm~\ref{alg:determine-test} decides convexity correctly
on every function that has at most $2$ distinct derivatives,
regardless of the distance to convexity.

\begin{proof}
If $f$ is convex, then the restriction of $f$ to every subset of $[n]$ is also convex and Algorithm~\ref{alg:determine-test} accepts.

Suppose $f$ is not convex. 
This immediately implies that $f$ has two distinct discrete derivatives, which we denote by $r_1 < r_2$. 
Now, it is necessary that $r_1 = \Delta_f(1)$ and $r_2 =
\Delta_f(n-1)$ for the restriction of $f$ to $\{1,2, n-1, n\}$ to be convex, for 
otherwise Algorithm~\ref{alg:determine-test} immediately rejects.

\ignore{Let $\Delta_f(1) = \alpha$. Note that if $\alpha= r_2$ (that is, the
largest derivative), then $f$ is
not convex, iff $f(n) < f(1) + \alpha (n-1)$.
In this case $1,2,n$ form a violation to convexity. 
% (regardless of the
% assumption that $\alpha=r_2$, and regardless of the restriction that
% there are at most $2$ derivatives). 
Hence we may
assume that $\alpha=r_1$ (the min-value derivative)  and that $f(n) \geq f(1) + \alpha (n-1)$.

Let $\beta = f(n) - f(n-1)$. An analogous reasoning to the above implies
that we may assume that $\beta=r_2$ and that $f(1) > f(n) - \beta
(n-1)$ as otherwise $1,n-1,n$ form a violation to convexity. So, we
conclude that we know .}%ignored -- found this a bit too long
Assuming that the restriction of $f$ to the set
$S = \{1,2,n-1, n \}$ is convex, Observation \ref{obs:0.2} implies that the only convex
function with $2$ distinct derivatives that is consistent with $f$ on the points in $S$ is the function
$f^*$ (from Observation~\ref{obs:0.2}), where the value of $j \in [n]$ is unique and is as determined by Algorithm~\ref{alg:determine-test}. 
If $f(j) \neq f^*(j)$, then the restriction of $f$ to $S \cup \{j\}$ is not convex 
and Algorithm~\ref{alg:determine-test} rejects.
In the rest of the proof, we argue that if $f(j) = f^*(j)$, then $f$ and $f^*$ has to evaluate to 
the same value on every point in $[n]$ and that $f$ is convex.
Specifically, each one of the discrete derivatives of $f$ upto the $j$ must be $r_1$, for 
otherwise, $f(j)$ will be larger than $f^*(j)$. Moreover,
each one of the discrete derivatives of $f$ from $j$ upto $n$ must be $r_2$, for otherwise,
$f(j)$ will be smaller than $f^*(j)$. 
That is, the functions are identical on every point in $[n]$. 
% \textcolor{red}{If $f(j) < f^*(j)$ then $f$ is not convex even on the
% points $S \cup \{j\}$ regardless of any assumption. If $f(j)=f^*(j)$
% then $f$ is convex, and if $f(j) > f^*(j)$ either $f$ contains at
% least $3$ distinct derivatives or it is not convex (even on the
% restriction to $S\cup \{j\}$).}
\end{proof}

% \begin{remark}
% Algorithm~\ref{alg:determine-test} can be made to reject only if it finds a violation
% to convexity. In the case $f$ has at least $2$ distinct derivatives
% then there is always an integer $j$ as required by the test.
% \end{remark}
\ignore{%old alg.
\begin{algorithm}
\caption{Convexity Basic Tester for Functions having at most $2$ Distinct Discrete Derivatives}
\label{algo:convexity-2-tester}
\begin{algorithmic}[1]
\Require parameter $\varepsilon \in (0,1)$; oracle access to function $f:[n] \to \mathbb{R}$.
\State Let $k \gets \left \lceil \frac{1}{\varepsilon} \right \rceil$, $\ell \gets \left\lfloor \frac{n}{k} \right\rfloor$. 
\State $\mathcal{H} \gets \left(\bigcup_{i \in [k + 1]}\{(i-1)\cdot
  \ell + 1, (i-1)\cdot \ell + 2\}\right) \cup \{n-1, n\}$.

\Comment{\textsf{Partition $[n]$ into $k$ parts of length $\ell$ each and one part (at the end) of length less than $\ell$. Call the first two points in each part as ``hubs'' $\mathcal{H}$. For the last part, the last two elements are also hubs.}}
%\Loop~$2k$ times:
\State Sample a uniform random index $j$ from $\{1,2, \dots , n\}$.  \label{step:loop-start}
\State Let $h_1 \le h_2$ be the closest two points from $\mathcal{H}$ that are less than $j$ and let $h_3 \le h_4$ be the closest two points from $\mathcal{H}$ that are greater than $j$.
\If {$h_1, h_2, j-1, j, j+1, h_3, h_4$ violate convexity} 
\State \textbf{Reject}
\EndIf \label{step:loop-end}
%\EndLoop
\State \textbf{Accept}
\end{algorithmic}
\end{algorithm}

\begin{lemma}
\label{lem:2deriv}
Algorithm~\ref{algo:convexity-2-tester} rejects, with probability $\geq \eps/8$, if the array is $\varepsilon$-far from convex. 
\end{lemma}
\begin{proof}
Without loss of generality, we can assume that the discrete derivatives in $f$ are $d_1$, $d_2$ and $d_1 \leq d_2$.
Four hubs $h_1, h_2, h_3, h_4 \in \mathcal{H}$ are said to be \emph{consecutive} if (1) $h_1 < h_2 < h_3 < h_4$, (2) $h_2 = h_1 + 1$, (3) $h_4 = h_3 + 1$, (4) $h_3 \ge h_2 + 2$, and (5) there are no hubs between $h_2$ and $h_3$.
The number of domain points between $h_2$ and $h_3$ (excluding both) is at least $\ell - 2 \ge \ell/2 \ge \varepsilon \cdot n /8$, where the inequalities hold for large enough $n$.  

Consider four consecutive hubs $h_1, h_2, h_3, h_4$. 

\textit{Case 1}: Assume that $f(h_2)-f(h_1)=d_2$ and $f(h_4)-f(h_3) = d_1$. 
If the algorithm, in Step~\ref{step:loop-start}, chooses any point $j$ in between $h_2$ and $h_3$, it rejects. 
Since, there are at least $\eps n /8$ such points, the algorithm rejects with probability at least $\eps/8$. % in each iteration, and hence, the rejection probability of the algorithm is at least $\eps/8$. %$1 - (1- \eps/8)^{16/\eps} \ge 2/3$.. 

\textit{Case 2}: Assume that $f(h_2)-f(h_1)= f(h_4)-f(h_3) = d_1$.
Assume also that there exists a point $i$ in between $h_2$ and $h_3$, such that $f(i) - f(i-1) = d_2$, and $f(i+1) - f(i) = d_1$. 
Consider a point $i \le x \le h_3$. 
If Algorithm~\ref{algo:convexity-2-tester} samples $x$ in Step~\ref{step:loop-start}, then it rejects since we have
\[\frac{f(x) - f(h_2)}{x - h_2} > d_1 = f(h_4) - f(h_3).\]
Consider now a point $h_2 \le y < i$.
If Algorithm~\ref{algo:convexity-2-tester} samples $y$ in Step~\ref{step:loop-start}, then it rejects since we have 
\[\frac{f(h_3) - f(y)}{h_3 - y} > d_1 = f(h_4) - f(h_3).\]
That is, there exists at least $\eps n /8$ points in the function, sampling any of which, Algorithm~\ref{algo:convexity-2-tester} rejects. 
Hence the rejection probability of Algorithm~\ref{algo:convexity-2-tester} is at least $\eps/8$. %$2/3$. 

\textit{Case 3}: Assume that $f(h_2)-f(h_1)= f(h_4)-f(h_3) = d_2$, and that there exists a point $h_2 < i < h_3$ such that $f(i) - f(i-1) = d_2$, and $f(i+1) - f(i) = d_1$.
Using similar arguments, we can show that Algorithm~\ref{algo:convexity-2-tester} rejects if it samples, in Step~\ref{step:loop-start}, any point in between $h_2$ and $h_3$.
Therefore, we can conclude that the Algorithm~\ref{algo:convexity-2-tester} rejects with probability at least $\eps/8$. %$2/3$.

In the rest of the analysis, we assume that for every four consecutive hub points $h_1, h_2, h_3, h_4$, none of the above three cases occur.
Since $f$ is not convex, there exists a unique set of consecutive hubs $h_1, h_2, h_3, h_4$ such that: (1) $f(h_2)- f(h_1) = d_1$ and $f(h_4) - f(h_3) = d_2$, (2) there exists $h_2 < z < h_3$ such that $f(z) - f(z-1) = d_2$ and $f(z+1) - f(z) = d_1$.

Let $x$ be the point closest to $h_2$ such that $f(x) - f(x-1) = d_2$. 
Let $y$ be the point closest to $h_3$ such that $f(y+1) - f(y) = d_1$. 
By our assumption it is clear that $x < y$. 

Let $w$ be any point such that $x \le w \le y$.
There are four possibilities for the values of $f(w+1)-f(w)$ and $f(w)-f(w-1)$: \begin{enumerate}
    \item $f(w+1)-f(w) = d_2$ and $f(w)-f(w-1) = d_1$.
    \item If $f(w+1)-f(w) = d_1$ and $f(w)-f(w-1) = d_1$, then $\frac{f(w)-f(h_2)}{w-h_2}> d_1 = f(w+1) - f(w)$.
    \item If $f(w+1)-f(w) = d_2$ and $f(w)-f(w-1) = d_2$, then $\frac{f(h_3)-f(w)}{h_3 -w} < d_2 = f(w) - f(w-1)$.
    \item If $f(w+1)-f(w) = d_1$ and $f(w)-f(w-1) = d_2$, then $\frac{f(w-1) - f(h_2)}{w-1 - h_2} > d_1 = f(w) - f(w-1)$.
\end{enumerate} 

In all the above cases, Algorithm~\ref{algo:convexity-2-tester} clearly rejects if it samples $w$ in Step~\ref{step:loop-start}.
If the number of such points $w$ is at least $\eps n/8$ then Algorithm~\ref{algo:convexity-2-tester} rejects with probability at least $\eps/8$. %$2/3$.

Suppose this is not the case. 
We now argue that it is possible to obtain a convex function by changing the values of $f$ only at such points $w$, which will contradict the fact that $f$ is $\eps$-far from convex. 
Let $d' = \frac{f(y+1) - f(x-1)}{y+1 -(x-1)}$. Clearly, $d_1 \le d' \le d_2$. Let us define a new function $g:[n] \to \mathbb{R}$, where $g(w) = f(x-1)+d'\cdot (w-x+1)$ if $x \le w \le y$ and $g(w) = f(w)$ otherwise.
It is clear that the function $g$ is convex.
This completes the proof.
\end{proof} 
}%old alg.

\subsection{A Nonadaptive Deterministic Convexity Tester for Functions having at most $2$ Distinct Discrete Derivatives}

As remarked above, Algorithm~\ref{alg:determine-test} is deterministic and decides convexity
exactly under the promise that $f$ has at most $2$ distinct derivatives.
However, it is \emph{adaptive}. What can be said about nonadaptive
algorithms for the same problem? It is easy to see that for any deterministic algorithm
that makes $q < n - 1$ \emph{nonadaptive} queries $Q
\subset [n]$, there
are two functions $g,h$, both having at most $2$ distinct derivatives,
for which $g|_Q = h|_Q$ but $g$ is convex while $h$ is not
convex. Hence there is no deterministic nonadaptive algorithm that decides
convexity exactly, while making at most $n-2$ queries. This line of reasoning immediately
extends to a $\Omega(n)$ lower bound on randomized nonadaptive algorithms
that exactly decide convexity.

Here we come back to the property testing
scenario. We show that there is a \emph{deterministic nonadaptive}
tester, Algorithm~\ref{alg:nonadaptive-test}, that accepts every convex function $f$ and rejects
every function $f$ that is $\eps$-far from convex, under the promise that
$f$ has at most $2$ distinct derivatives.  Algorithm~\ref{alg:nonadaptive-test} makes only
$O(1/\eps)$ nonadaptive deterministic queries.

\begin{algorithm}
\caption{}
\begin{algorithmic}[1]
\Require $\eps \in (0,1)$; oracle access to function $f:[n] \to \mathbb{R}$ having at most $2$ discrete derivatives
\State $S \gets \{x_i (= i \eps n )|  ~i=1, \ldots , 1/\eps \}$
\State Query $f(1), f(2), f(n-1), f(n)$ and $f(x_i),f(x_i+1)$ for all $i=1, \ldots, 1/\eps$
\State Set $r_1  \leftarrow f(2)-f(1)$ and $r_2 \leftarrow f(n)-f(n-1)$
\State \textbf{Reject} if $r_1 > r_2$
\State Let $j \in [1/\eps]$ be the largest integer such that $f(x_j) = f(1) + r_1 (x_j-1)$
\State \textbf{Reject} if for some $i \geq j+1$, $f(x_i) \neq f(n) - r_2(n-x_i)$ and \textbf{accept} otherwise
%\EndLoop
\end{algorithmic}
\label{alg:nonadaptive-test}
\end{algorithm}

\ignore{
{\bf Test 2:}  Choose $3+1/\eps$ points $\{x_i = i \eps n, ~i=1, \ldots , 1/\eps \}$ equally spaced in $[n]$. Query
$f(1),f(2),f(n), f(n-1),$
and $f(x_i),f(x_i+1)$ for all $i=1, \ldots, 1/\eps$. Let $r_1  =
f(2)-f(1)$ and $r_2 = f(n)-f(n-1)$. Reject if $r_1 > r_2$.

Let $j \in [1/\eps]$ be the largest such that $f(x_j) = f(1) + r_1 (x_j-1)$. 
Reject if for some $i \geq j+1$ $f(x_i) \neq f(n) - r_2
(n-x_i)$. Otherwise accept.}%old testing algo

\begin{claim}
  Algorithm~\ref{alg:nonadaptive-test} accepts every convex function $f$ and rejects every
  function $f$ that is $\eps$-far from convex, provided that $f$
  has at most $2$ distinct derivatives. Further, Algorithm~\ref{alg:nonadaptive-test} is a deterministic
  nonadaptive tester making $O(1/\eps)$ queries.
\end{claim}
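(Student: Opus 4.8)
The plan is to dispatch the easy structural claims by inspection and then split correctness into completeness (every convex $f$ is accepted) and soundness (every accepted $f$ is $\eps$-close to convex). The query count is immediate: Algorithm~\ref{alg:nonadaptive-test} reads the four boundary values together with the $2/\eps$ values $f(x_i),f(x_i+1)$, so it makes $4+2/\eps=O(1/\eps)$ queries; since the set $S$ and all queried indices are fixed in advance and no branch depends on a previously read value before the corresponding query is issued, the tester is deterministic and nonadaptive.

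For completeness I would use Observation~\ref{obs:0.2}. If $f$ has a single derivative it is linear, all checks pass, and the test accepts; otherwise the range of $\Delta_f$ is $\{r_1<r_2\}$ and $f=f_{a,r_1,j^*,r_2}$ with $2\le j^*\le n-1$. Hence $f(2)-f(1)=r_1$ and $f(n)-f(n-1)=r_2$, so the measured slopes agree with the true extreme slopes and the test does not reject at the $r_1>r_2$ step. Writing $L_1(x)=f(1)+r_1(x-1)$ and $L_2(x)=f(n)-r_2(n-x)$, the function $f_{a,r_1,j^*,r_2}$ coincides with $L_1$ on $[1,j^*]$ and with $L_2$ on $[j^*+1,n]$. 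Thus the sample points with $x_i\le j^*$ satisfy $f(x_i)=L_1(x_i)$ and those with $x_i>j^*$ satisfy $f(x_i)=L_2(x_i)$; as the former form a prefix of the indices, the $j$ selected by the algorithm is exactly the largest $i$ with $x_i\le j^*$ (or $j=0$ if there is none), and every $i\ge j+1$ passes the final check, so the algorithm accepts.

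For soundness I would argue the contrapositive via an averaging argument. Suppose the algorithm accepts. The accept conditions give $f(x_j)=L_1(x_j)$ and $f(x_i)=L_2(x_i)$ for all $i\ge j+1$, so the average derivative of $f$ over $[1,x_j]$ equals $r_1$, and its average derivative over each block $[x_i,x_{i+1}]$ with $i\ge j+1$ equals $r_2$. Since each of $r_1,r_2$ is one of the at most two values taken by $\Delta_f$, it is either the smallest or the largest observed derivative; an average of values from a two-element set can hit an extreme value only if every summand equals that value. Hence $f$ coincides with $L_1$ on all of $[1,x_j]$ and with $L_2$ on all of $[x_{j+1},n]$, so $f$ is pinned to a convex shape everywhere except on the single interval $(x_j,x_{j+1})$, which holds fewer than $\eps n$ domain points. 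I would then define $g$ to agree with $f$ outside this interval and, inside it, to follow a two-slope convex path that leaves $x_j$ with slope $r_1$ and reaches $x_{j+1}$ with slope $r_2$; such a path exists precisely because the chord slope across $(x_j,x_{j+1})$ is an average of derivatives from $\{r_1,r_2\}$ and therefore lies in $[r_1,r_2]$. This $g$ is convex and differs from $f$ on fewer than $\eps n$ points, contradicting $\eps$-farness.

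The step I expect to be the main obstacle is the degenerate boundary case $r_1=r_2$. There the interval $[r_1,r_2]$ collapses to a point, so the convex interpolation above exists only if the chord slope across $(x_j,x_{j+1})$ equals $r_1$ exactly; a function that is flat-sloped on both sides but hides a short stretch of the other derivative strictly between two consecutive sample points would satisfy both anchor-line checks while being far from convex. Resolving this case is the delicate heart of the argument: one must show that acceptance forces the two extrapolation lines to coincide, i.e. $f(1)+r_1(n-1)=f(n)$, so that $f$ is squeezed onto a single line and no hidden bump can survive. Making this consistency explicit, and confirming that the sampling density of one point per $\eps n$-block is enough to rule out such a bump, is the part of the proof that demands the most care.
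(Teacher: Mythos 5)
Your treatment of the query count, completeness, and the main soundness case is in substance the same as the paper's proof, and in one respect more careful: the paper asserts ``since $f(x_j) = f(1) + r_1(x_j-1)$, the function $f$ is linear on $[x_j]$'' without justification, whereas you make explicit the averaging argument (a mean of values from a two-element set equals one of them only if all summands do) that this step silently relies on. Your repair inside the single unpinned gap is also the paper's repair, namely overwriting $(x_j, x_{j+1})$ to agree with $f_{f(1),r_1,j^*,r_2}$; note only that the existence of an \emph{integer} breakpoint $j^*$ follows not from the chord slope merely lying in $[r_1,r_2]$ but from its being exactly $m_1 r_1 + m_2 r_2$ over unit steps with $m_1,m_2$ nonnegative integers, which your setup supplies anyway.

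The degenerate case $r_1 = r_2$ that you flag is a genuine gap --- and, remarkably, it is a gap in the paper's own proof, which simply declares that ``$r_1 \le r_2$ are the two distinct discrete derivatives of $f$,'' an assumption with no content when $r_1 = r_2$ while $f$ hides a second derivative value. But your proposed resolution cannot work: the family you sketch is an actual counterexample to soundness of Algorithm~\ref{alg:nonadaptive-test} as written. Take $f$ with derivative $r$ everywhere except a single step of derivative $d \neq r$ placed strictly inside one sample gap near the middle of $[n]$. Then $r_1 = r_2 = r$, every sample left of the step lies on $L_1$, every sample to its right lies on $L_2$, and the algorithm accepts; yet $f$ is $\Theta(1)$-far from convex, since no convex function can agree with two points on each of the two parallel, distinct lines (the chord slopes would go $r$, then $\neq r$, then $r$), so roughly half the domain must be modified. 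In particular, acceptance does \emph{not} force $f(1) + r_1(n-1) = f(n)$, so the consistency you say ``one must show'' is false of the algorithm as stated, and no proof --- yours or the paper's --- can close this case without changing the algorithm. The natural fix uses the values $f(x_i+1)$, which Algorithm~\ref{alg:nonadaptive-test} queries but never consults in its accept/reject logic: rejecting whenever the restriction of $f$ to all queried points violates convexity catches the hidden step, because the chord slope across the offending gap differs from the slopes $r$ on either side. With that extra check, your explicit averaging argument (which, as you note, works whether the boundary slope is the minimum or the maximum of the two derivative values) goes through and yields the claim; Observation~\ref{obs:0.2} then pins down the repaired function exactly as in the non-degenerate case.
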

\begin{proof}
  The claim about the query complexity is clear. Further, by
  Observation \ref{obs:0.2}, if $f$ is convex having at most $2$
  distinct derivatives, then for some $j^* \in [n]$, the function $f$ is of the 
  form given in the observation. Let $x_j \leq j^* \leq x_{j+1}$. Then $f$ is consistent
  with $j$ in the acceptance criterion of Algorithm~\ref{alg:nonadaptive-test}, and hence will be
  accepted.

Next, consider a function $f$ that is accepted by Algorithm~\ref{alg:nonadaptive-test}. 
That is, there exists $j \in [1/\eps]$ such that 
$f(x_i) = f(1) + r_1 (x_i-1)$ for every $i \leq
j$, and that $f(x_i) = f(n) - r_2(n-x_i)$ for every $i \geq
j + 1$, where $r_1 \leq r_2$ are the two distinct discrete derivatives of $f$.
Since $f(x_j) = f(1) + r_1 (x_j-1)$, the function $f$ is a linear 
function when restricted to the set $[x_j]$. 
Similarly, when restricted to the set $[n] \setminus [x_{j+1} - 1]$,
the function $f$ is linear with slope $r_2$.
Further, it can be seen that $f$ can be corrected to be
convex by changing the values for $x_{j}+1 \leq i \leq x_{j+1}-1$ to
be consistent with $f_{f(1),r_1,j^*, r_2}$. As this changes at most 
$\eps n$ points, it implies that $f$ is $\eps$-close to convex.
\end{proof}

\section{Convexity Tester for Functions having at most $s$ Distinct Discrete Derivatives}

In this section, we describe our convexity tester for the case that the function $f:[n] \to \mathbb{R}$ has at most $s$ distinct discrete derivatives and prove Theorem~\ref{thm:convexity-r-deriv-test}.
A basic tester is presented in Algorithm~\ref{alg:convexity-r-derivative-tester}. 
For simplicity, we assume throughout this section, that $s/\eps$ is an integer that divides $n$.

The top level idea is the following: suppose that $f$ is convex with
at most $s$ distinct discrete 
derivatives, and let  $B$ be a set of $\ell =
1+\frac{2s}{\eps}$ nearly equally spaced consecutive pairs of
points in $[n]$ starting with $1,2$, namely, $B = \{1, 2\} \cup 
\{i\cdot \frac{\eps n}{2s} - 1, i\cdot \frac{\eps n}{2s}: ~ i=1, \ldots \ell-1\}$. 
Let $x_i = i\cdot \frac{\eps n}{2s}$ for $i \in [\ell - 1]$. By the
assumption on $f$, the function $\Delta_f|_I$ is the constant function on  at least
$\ell - s$ of the intervals $I = [x_i, x_{i+1}-1]$. Further, if
$\Delta_f|_I$ is constant on $I = [x_i, x_{i+1}-1]$, then obviously
$f(j) = f(x_i) + (j-x_i) \cdot \Delta_f(x_i)$ for $j \in I$.  Thus in order to check
that $f$ is convex, we first check that $f|_B$ is convex using the
nonadaptive, 1-sided
error basic $\eps$-tester of Belovs et al. \cite{BelovsBB20} by making $O(\log (\eps|B|))$ queries. Afterwards, we test that $f$ is close to being a
linear function on most intervals
$I$.  To test
``linearity'' of $f|_I$ on most such $I$, it is enough to pick a random such
interval and test the distance to the appropriate linear function, which will result in a large enough success probability. The details follow.

Algorithm~\ref{alg:convexity-r-derivative-tester} invokes Algorithm~\ref{alg:pseudo-er-POT-convexity} as a subroutine, where Algorithm~\ref{alg:pseudo-er-POT-convexity} is a basic tester for convexity of functions defined over subdomains of $[n]$.
%The convexity tester of Belovs et al.~\cite{BelovsBB20} is obtained by repetitions of a basic tester and can be easily transformed into a pseudo-erasure-resilient basic tester.
%We describe the resulting tester in Algorithm~\ref{alg:pseudo-er-POT-convexity} for the sake of completeness.
The following theorem can be proven by modifying the analysis of the convexity tester by Belovs et al.~\cite{BelovsBB20} in a fairly straightforward manner. 
We have included its proof in the Appendix.
\begin{theorem}[Belovs et al.~\cite{BelovsBB20}]\label{thm:belovs-pseudo-er-POT}
Let $B \subseteq [n]$.
There exists a basic $\eps$-tester for convexity of functions of the form $f:B \to \mathbb{R}$ that works for all $\eps \in (0,1)$ with query complexity $O(\log(\eps |B|))$.
\end{theorem}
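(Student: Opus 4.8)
The plan is to treat the Belovs et al.\ tester as a black box whose only dependence on the domain enters through (i) the linear order and the number of domain points, which govern the dyadic decomposition underlying the sampling and hence the query complexity, and (ii) the true coordinates of the sampled points, which enter the convexity inequality \cref{eq:convex} that the tester verifies on the points it reads. To obtain the claimed tester over a subdomain $B=\{b_1<b_2<\dots<b_m\}$ with $m=|B|$, I would run their procedure on the index set $\{1,\dots,m\}$ obtained by relabelling each $b_i$ by its $\ord$, so that the dyadic tree now has $m$ leaves and $O(\log m)$ levels; whenever the tester compares slopes on a sampled triple, I would feed it the genuine coordinates $b_i<b_j<b_k$ rather than the relabelled indices. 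Since the basic tester over a domain of $N$ points reads $O(\log(\eps N))$ points and rejects $\eps$-far inputs with probability at least $\eps$, this immediately yields query complexity $O(\log(\eps |B|))$.

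For correctness, one direction is immediate: if $f\colon B\to\mathbb{R}$ is convex then every restriction of $f$ is convex, so the sampled triples never violate \cref{eq:convex} and the tester accepts with probability $1$, giving the one-sided guarantee. The content is in the rejection bound, and here I would reopen the analysis of Belovs et al.\ rather than use it as a black box. Their argument shows that if the tester detects a violating triple with probability less than $\eps$, then $f$ can be turned into a convex function by modifying a small (fewer than $\eps|B|$) set of points, the modification being performed by identifying a sparse set of ``bad'' points and linearly interpolating $f$ through the surviving points within each relevant dyadic block. I would verify that this correction argument is insensitive to the spacing of the domain: convexity of $f$ on $B$ is equivalent to the consecutive-slope sequence $\tfrac{f(b_{i+1})-f(b_i)}{b_{i+1}-b_i}$ being non-decreasing, which is an order-theoretic condition once the coordinates are fixed, and linear interpolation between two surviving points yields a valid convex piece for \emph{any} coordinates of the intermediate points. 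Consequently the same counting, with the fixed coordinates $b_i$ substituted for the integer positions, lower-bounds the detection probability by the distance of $\restr{f}{B}$ to convexity measured as a fraction of $|B|$.

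The step I expect to be the main obstacle is auditing their distance-to-violation estimate to confirm that no inequality secretly exploits uniform spacing: any bound on how many points a single dyadic block contributes to the correction, or on how violating slopes accumulate across levels, must be seen to rely only on the \emph{order} of the points and on the genuine slope comparisons (which the modified tester does supply), and not on the differences $b_{i+1}-b_i$ being equal. Since the dyadic decomposition is built purely from the order of $B$ while the convexity inequalities are evaluated at the true coordinates, I expect every such step to survive the substitution of ``position $i$'' by ``position $b_i$'', so that the adaptation is, as claimed, fairly straightforward, with the routine verification deferred to the appendix.
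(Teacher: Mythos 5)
Your proposal is correct and follows essentially the same route as the paper's appendix: there, too, the dyadic test-set structure (root, hub, scale) is defined purely through $\ord$, i.e., ranks in $B$, while the convexity inequalities are evaluated at the true coordinates, and Lemmas~\ref{lem:test-triplet-common-hub} and~\ref{lem:conv-violation} together with the greedy construction of the violating set $C$ carry out exactly the order-only audit you describe, with linear interpolation through the surviving points supplying the correction (your contrapositive phrasing of the same counting argument). The only cosmetic difference is that the paper's test-sets are quadruples $\{a,a',h,h'\}$ pairing each root and hub with its order-successor in $B$ --- which your relabelling onto $\{1,\dots,m\}$ produces automatically, since a ``discrete derivative'' on a non-uniform domain requires a pair of consecutive points.
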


\begin{algorithm}
\caption{Convexity Tester}
\begin{algorithmic}[1]
\Require parameter $\eps \in (0,1)$; oracle access to function $f:[n] \to \mathbb{R}$; upper bound $s$ on the number of distinct discrete derivatives in $f$
\State Let $B = 
\{1, 2\}\cup \{i\cdot \frac{\eps n}{2s} - 1, i\cdot \frac{\eps n}{2s} : ~ i=1, \ldots \frac{2s}{\eps}\}$. 
\State Test convexity of $f|_B$ with parameter $\eps/32$ using Algorithm~\ref{alg:pseudo-er-POT-convexity} and
\textbf{reject} if that execution rejects.
\label{step:pseudo-er-convexity-test}
\State Sample a point $x \in_R [n]$ u.a.r.
\State Let $y \gets \left\lfloor\frac{2sx}{\eps n}\right\rfloor$. 
\State \textbf{Reject} if the points $y-1, y, x, y-1+\frac{\eps n}{2s}, y+\frac{\eps n}{2s}$ violate convexity.
\label{step:convexity-test-loop-end}
%\EndLoop
\end{algorithmic}
\label{alg:convexity-r-derivative-tester}
\end{algorithm}

\begin{algorithm}[t]
\caption{Basic Tester for Convexity over Subdomains of $[n]$}
\begin{algorithmic}[1]
\Require oracle access to a function $g:B \to \mathbb{R}$; parameter $\eps$
%\State For $x \in B$, let $\mathsf{order}(x)$ be the number of points $y \in B$ such that $y \le x.$
\Loop~$\lceil 24 \log(2\eps|B|)\rceil$~times:
\State Draw a point $a \in B$ uniformly at random.
\State Let $a' \in B$ be such that $\ord(a') = \ord(a)+1$.
\State Pick a number $k$ uniformly at random from $\{0, 1, \dots, \lceil \log 2\eps |B| \rceil\}$.
%\Comment{\textsf{If $a$ is the nonerased point with the largest order, we let $a'$ be $a$.}}
\State Let $h \in B$ be a point such that $\mathsf{order}(h)$ is a multiple of $2^k$, where, with probability $1/2$, the point $h$ is the smallest such point larger than $a$, and with probability $1/2$, it is the largest such point smaller than $a$.
\State Let $h' \in B$ be such that $\mathsf{order}(h') = \mathsf{order}(h) + 1$.
\State \textbf{Reject} if the set $\{a,a',h, h'\}$ violates convexity.
\EndLoop
\end{algorithmic}
\label{alg:pseudo-er-POT-convexity}
\end{algorithm}

Lemma~\ref{lem:conv-general-analysis} shows that Algorithm~\ref{alg:convexity-r-derivative-tester} is indeed a basic $\eps$-tester for convexity.
Our convexity tester with query complexity $O(\log(s)/\eps)$ is obtained by $O(1/\eps)$ repetitions of Algorithm~\ref{alg:convexity-r-derivative-tester}.
This completes the proof of Theorem~\ref{thm:convexity-r-deriv-test}.

\begin{lemma}\label{lem:conv-general-analysis}
Algorithm~\ref{alg:convexity-r-derivative-tester}, rejects with probability at least $\eps/32$, every function $f$ having at most $s$ distinct discrete derivatives that is $\eps$-far from convex.
\end{lemma}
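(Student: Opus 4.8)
The plan is to show that if $f$ is $\eps$-far from convex, then at least one of the two rejection conditions in Algorithm~\ref{alg:convexity-r-derivative-tester} fires with probability $\ge \eps/32$. The natural dichotomy is: either $f|_B$ is already far from convex (caught by Step~\ref{step:pseudo-er-convexity-test}), or $f|_B$ is close to convex but $f$ deviates substantially from linear behavior on many of the intervals $I_i = [x_i, x_{i+1}-1]$ (caught by the random-interval check in Step~\ref{step:convexity-test-loop-end}). First I would set up notation: let $x_i = i \cdot \frac{\eps n}{2s}$, and call an interval $I_i$ \emph{good} if $f|_{I_i}$ agrees with the linear function determined by its endpoints (equivalently, $\Delta_f$ is constant on $I_i$), and \emph{bad} otherwise.

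The crux is the following contrapositive claim: \emph{if} $f|_B$ is $\frac{\eps}{32}$-close to convex \emph{and} all but a small fraction of the intervals $I_i$ are good, \emph{then} $f$ is $\eps$-close to convex. To argue this, I would take the convex function $g_B$ on $B$ witnessing closeness of $f|_B$, and extend it to a convex function $\tilde g$ on all of $[n]$ by piecewise-linear interpolation between the consecutive points of $B$ (this is convex because $\Delta_{g_B}$ is monotone on $B$, and linear interpolation between points of a convex sequence stays convex). I then need to bound the number of points where $f \ne \tilde g$. On a good interval where $f|_B$ was left unchanged by the correction, $f$ agrees with its own endpoint-linear-interpolant, which I must reconcile with $\tilde g$; the points charged to disagreement come from (i) the $\le \eps |B|/32$ points of $B$ altered in the correction, scaled up by the interval length $\frac{\eps n}{2s}$, and (ii) the bad intervals. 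Counting: each altered hub or bad interval contributes at most $\frac{\eps n}{2s}$ points, so I need the total count of bad intervals plus corrected hubs to be at most a constant fraction of $\frac{2s}{\eps}$ to keep the point-disagreement below $\eps n$. Tracking these constants carefully (the $\eps/32$ and the factor of $2$ in $\ell = 1 + \frac{2s}{\eps}$) is what makes the $\eps/32$ rejection bound come out.

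For the quantitative probability bound, suppose $f$ is $\eps$-far but $f|_B$ is accepted by Step~\ref{step:pseudo-er-convexity-test} with good probability; then by the above claim the number of bad intervals must be at least some $\delta \cdot \frac{2s}{\eps}$ fraction. A uniformly random $x \in [n]$ lands in a fixed interval $I_i$ with probability $\frac{\eps}{2s}$ (intervals are equal-sized), so the probability that $x$ lands in a bad interval is $\ge \delta$. The remaining step is to verify that whenever $x$ lies in a bad interval $I_i$, the five queried points $y-1, y, x, y-1+\frac{\eps n}{2s}, y+\frac{\eps n}{2s}$ (here $y = x_i$, the left endpoint) actually witness a convexity violation — this uses that on a bad interval $\Delta_f$ is non-constant, so comparing the slope across $[y-1,y]$, the slope through $x$, and the slope across the next interval's left edge exposes a non-monotonicity of $\Delta_f$. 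I would combine this with a union-bound / conditioning argument over the two steps.

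The main obstacle I anticipate is the \emph{correction argument} in the second paragraph: cleanly defining the global convex extension $\tilde g$, and honestly accounting for every point of disagreement between $f$ and $\tilde g$ so that the total stays below $\eps n$. In particular, one must be careful that correcting $f|_B$ to convexity on the sparse hub set does not force changes that propagate unboundedly across good intervals, and that the interplay between ``$f|_B$ close to convex'' (a statement about $\eps|B|$ hub points) and ``$f$ close to convex'' (a statement about $\eps n$ domain points) is mediated correctly by the interval length $\frac{\eps n}{2s}$. A secondary subtlety is confirming that the specific five-point pattern in Step~\ref{step:convexity-test-loop-end} is guaranteed to catch \emph{every} bad interval, rather than only those bad intervals whose non-monotonicity is visible at the sampled resolution.
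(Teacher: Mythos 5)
Your skeleton matches the paper's (dichotomy: either $f|_B$ is $\eps/32$-far and Step~\ref{step:pseudo-er-convexity-test} rejects, or $f|_B$ is close and the random sample in Step~\ref{step:convexity-test-loop-end} must do the work), but the claim your second branch hinges on --- that whenever $x$ lands in a ``bad'' interval (one where $\Delta_f$ is non-constant) the five queried points witness a violation --- is false, and the ``secondary subtlety'' you flag at the end is in fact the fatal gap. Concretely, suppose the interval's two boundary-pair slopes strictly increase, $f(b_k+1)-f(b_k)=t_1 < t_2 = f(b_{k+1}+1)-f(b_{k+1})$, and $f$ is convex \emph{within} the interval, bending from slope $t_1$ up to $t_2$. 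Then $\Delta_f$ is non-constant on the interval (bad in your sense), yet the set $\{b_k,\, b_k+1,\, x,\, b_{k+1},\, b_{k+1}+1\}$ is convex for \emph{every} $x$ in the interval, so Step~\ref{step:convexity-test-loop-end} never rejects there. Since a function with only $s$ distinct derivative \emph{values} can still have $\Delta_f$ change value inside essentially all $2s/\eps$ intervals, your per-interval accounting (``probability $\ge \delta$ of landing in a bad interval'') cannot be converted into a rejection probability. The paper's analysis is per-point, not per-interval: it defines an interval as \emph{nearly linear} by the hub-level condition~\cref{equn:slope-general} (left-edge slope $=$ average slope $=$ right-edge slope), which is strictly weaker than constancy of $\Delta_f$, and shows that inside a good nearly linear interval every point $z$ deviating from the linear interpolant $g_k$ individually triggers the five-point violation; the number of such points is at least the Hamming distance $D_k$ of $\restr{f}{I_k}$ to $g_k$, and $\sum_k D_k \ge \eps n/32$ yields the stated rejection probability.

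The second missing ingredient is where the hypothesis of at most $s$ distinct derivatives is actually deployed --- your proposal never uses it at the decisive point. The intervals on which the five-point check is powerless are exactly the good intervals that fail near-linearity, i.e., have a strict inequality in~\cref{equn:slope-general-nondecreasing}; since $f$ restricted to the good (uncorrected) points is convex and has at most $s$ distinct derivatives, each strict inequality consumes one derivative value, so there are at most $s-1$ such intervals. Together with the at most $15s/16$ intervals touching a corrected hub, these account for at most $\frac{\eps n}{2} + \frac{15 \eps n}{32}$ domain points, which are simply written off, leaving at least $\eps n/32$ of the farness concentrated as pointwise deviations inside good nearly linear intervals where every deviation is caught. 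Your contrapositive extension argument (correct $f|_B$, extend piecewise-linearly, charge altered hubs and bad intervals) is sound as far as it goes and mirrors the paper's bookkeeping, but without the $s-1$ bound on non-nearly-linear good intervals and the per-point $D_k$ count, the chain from ``$\eps$-far'' to ``reject with probability $\eps/32$'' does not close.
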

\begin{proof}
It is enough to argue that Steps~\ref{step:pseudo-er-convexity-test}-\ref{step:convexity-test-loop-end} of Algorithm~\ref{alg:convexity-r-derivative-tester} rejects with probability at least $\eps/32$. %, since, by amplification, the probability of rejection would be at least $2/3$. 

If $f|_B$ is $\eps/32$-far from being convex, by Theorem~\ref{thm:belovs-pseudo-er-POT}, one iteration of Algorithm~\ref{alg:pseudo-er-POT-convexity} rejects with probability at least $\eps/32$. 

In the rest of the proof, we assume that $f|_B$ is $\eps/32$-close to convex. 
In other words, it is possible to modify $f|_B$ in at most $\frac{\eps|B|}{32}$ points in $B$ in order to make $f|_B$ convex.

Let $b_k$ for $k \in [2s/\eps]$ be shorthand for the index $k \cdot \frac{\eps n}{2s} - 1$, and let $b_0$ stand for the index $1$. 
Let $I_k$ denote the interval of indices $\{b_{k} + 1, \dots , b_{k+1}\}$ for $k \in [(2s/\eps) - 1]$. 
Let $I_0$ denote the interval of indices $\{2, \dots, b_1\}$.
For $k \in [(2s/\eps) - 1]$, the interval $I_k$ is \emph{nearly linear} if 
\begin{equation}
f(b_k+1) - f(b_k) = \frac{f(b_{k+1}) - f(b_k + 1)}{b_{k+1} - (b_k + 1)} = f(b_{k+1}+1) - f(b_{k+1}).
\label{equn:slope-general}
\end{equation}
The interval $I_0$ is nearly linear if 
\begin{equation}
f(2) - f(1) = \frac{f(b_1) - f(1)}{b_1 - 1} = f(b_1+1) - f(b_1). 
\label{equn:slope-first-interval}
\end{equation}

We first prove a lower bound on the number of nearly linear intervals. 
Recall that there is a way to modify $f|_B$ by changing its values on at most $\eps \cdot |B|/32$ \emph{bad} points. 
For $k \in [(2s/\eps) - 1]$, the interval $I_k$ is bad if one among $b_k, b_k+1, b_{k+1}, b_{k+1} + 1$ is a bad point, and is good otherwise.
Likewise, $I_0$ is bad if one among $1, 2, b_1, b_1 + 1$ is a bad point, and is good otherwise.
The number of bad intervals is, therefore, at most $\eps |B|/16$, since a bad point can make at most two intervals bad. Now, $\eps|B|/16$ is at most $15 s/16$, by substituting the value of $|B|$. 
 
For $k \in [(2s/\eps) - 1]$, if the interval $I_k$ is good, then none of the points in $\{b_k, b_k + 1, b_{k+1}, b_{k+1} + 1\}$ are bad, and hence we have
\begin{equation}
f(b_k+1) - f(b_k) \le \frac{f(b_{k+1}) - f(b_k + 1)}{b_{k+1} - (b_k + 1)} \le f(b_{k+1}+1) - f(b_{k+1}).
\label{equn:slope-general-nondecreasing}
\end{equation}

Similarly, if $I_0$ is good, then 
\begin{equation}
f(2) - f(1) \le \frac{f(b_1) - f(1)}{b_1 - 1} \le f(b_1+1) - f(b_1). 
\label{equn:slope-first-interval-nondecreasing}
\end{equation}

For a good interval $I_k$ (or $I_0$) for $k \in [(2s/\eps) - 1]$ that is not nearly linear, one of the inequalities in Equation~\ref{equn:slope-general-nondecreasing} (Equation~\ref{equn:slope-first-interval-nondecreasing}, respectively) must be a strict inequality. 
Since the number of distinct discrete derivatives in $f$ is at most $s$, the number of distinct discrete derivatives among restricted to the good points is also at most $s$. Since the function $f$ restricted to the good points is convex, the number of good intervals with strict inequalities (in Equation~\ref{equn:slope-general-nondecreasing}  or Equation~\ref{equn:slope-first-interval-nondecreasing}) is at most $s-1$. Hence, the number of good intervals that are not nearly linear is at most $s-1$. 

Since $f$ is $\eps$-far from being convex and each interval has at most $\frac{\eps n}{2s}$ indices, the restriction of $f$ to the set of indices belonging to good nearly linear intervals, has distance at least $\eps n - \frac{15s}{16} \cdot \frac{\eps n}{2s} - s \cdot \frac{\eps n}{2s} = \eps n - \frac{\eps n}{2} - \frac{15\eps n}{32}$ from convexity.

% For an interval $\{\ell, \ell + 1, \dots, u\} \subseteq [n]$ and $t \in \mathbb{R}$, let $\mathcal{L}_t$ denote the property that \[f(\ell) - f(\ell - 1) = f(\ell+1) -f(\ell) = \dots = f(u +1) - f(u)  = t.\]

% For $k \in [(2s/\eps) - 1]$, for a good nearly linear interval $I_k$, we use $D_k$ to denote its Hamming distance from the property $\mathcal{L}_t$, where $t = f(b_{k+1} + 1) - f(b_{k+1})$. 
% Similarly, if $I_0$ is good and nearly linear, we use $D_0$ to denote its Hamming distance from the property $\mathcal{L}_t$, where $t=f(2) - f(1)$.

For $k \in [(2s/\eps) - 1]$, for a good nearly linear interval $I_k$, we use $D_k$ to denote
the Hamming distance of $\restr{f}{I_k}$ to the linear function $g_k: I_k \to \mathbb{R}$ defined as 
$g_k(x) = f(b_{k} + 1) + (x - b_k - 1)\cdot t$ for $x \in I_k$, where  $t = f(b_{k+1} + 1) - f(b_{k+1})$.
Similarly, if $I_0$ is good, we use $D_0$ to denote the Hamming distance of $\restr{f}{I_0}$ to the linear function $g_0: I_0 \to \mathbb{R}$ defined as $g_0(x) = f(1) + (x - 1)\cdot t$, where $t = f(2) - f(1)$.

Consider the restriction of $f$ to the set of indices that belong to the good nearly linear intervals. We can make this restriction convex by replacing $\restr{f}{I_k}$ with $g_k$ for each $k$ such that $I_k$ is a good nearly linear interval. Hence, \[\sum_{\substack{k: I_k \\ \text{nearly linear}\\ \text{and good}}} D_k \ge \eps n - \frac{\eps n}{2} - \frac{15\eps n}{32} \ge \frac{\eps n}{32}.\]

The proof will be completed by arguing that there are at least $D_k$ points $x$ in a good nearly linear interval $I_k$ such that Algorithm~\ref{alg:convexity-r-derivative-tester} rejects by sampling $x$ in Step~\ref{step:convexity-test-loop-end}.

Consider a good nearly linear interval $I_k$ such that $f(b_k +1) - f(b_k) = t$. 
Consider the (favorable) set $F$ consisting of all points $z \in I_k$ such that $f(z) - f(b_{k} + 1) = t \cdot (z - b_{k} - 1)$ and $f(b_{k+1}) - f(z) = t \cdot (b_{k+1} - z)$. 
Clearly, both $b_k+1$ and $b_{k+1}$ are in $F$.
If Algorithm~\ref{alg:convexity-r-derivative-tester} samples, in Step~\ref{step:convexity-test-loop-end}, a point $z \notin F$, it rejects.
We now show that we can repair the function values at points not in $F$ and make $\restr{f}{I_k}$ be equal to $g_k$. 
Consider an interval of points $\{x, x+1, \dots, y\}$ such that none of them are in $F$, and where, both $x-1$ and $y+1$ are in $F$. 
Since $x-1$ and $y+1$ are both in $F$, we have that $f(y+1) - f(x-1) = t \cdot (y - x +2)$. 
We repair the function on the interval $\{x, x+1, \dots, y\}$ by assigning the value $f(x-1) + (x'- x +1) \cdot t$ for all $x' \in \{x, x+1, \dots, y\}$. 
We can repair the function on the whole interval and make $\restr{f}{I_k}$ be equal to $g_k$ by applying the same modification on every such maximal subinterval, where the maximality is in the sense of not belonging to $F$. 

Hence, the probability that the tester rejects in %a single iteration of
Step~\ref{step:convexity-test-loop-end} is at least $\eps/32$. 
This completes the proof.
\end{proof}

\section{Lower Bound}
In this section, we prove Theorem~\ref{thm:lower-bound}.
\begin{lemma}
For every sufficiently large $s \in \mathbb{N}$, every $\eps \in [1/s, 1/9]$, and for every sufficiently large $n \ge s$, every $\eps$-tester for convexity of functions $f:[n] \to \mathbb{R}$ having at most $s$ distinct derivatives has query complexity $\Omega\left(\frac{\log (\eps s)}{\eps}\right)$.
\end{lemma}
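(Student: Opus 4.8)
The plan is to prove the lower bound via Yao's minimax principle. I would construct a distribution $\mathcal{D}_{\mathrm{yes}}$ supported on convex functions $f:[n]\to\mathbb{R}$ with at most $s$ distinct discrete derivatives and a distribution $\mathcal{D}_{\mathrm{no}}$ supported on functions with at most $s$ distinct discrete derivatives that are $\eps$-far from convex, and then show that no deterministic algorithm issuing fewer than $c\cdot\frac{\log(\eps s)}{\eps}$ queries (for a small constant $c$) can distinguish a draw from $\mathcal{D}_{\mathrm{yes}}$ from a draw from $\mathcal{D}_{\mathrm{no}}$ with probability bounded away from $1/2$. By Yao's principle this transfers to a lower bound against randomized, and even adaptive, testers. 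All the reasoning is carried out through the derivative function $\Delta_f$, using that $f$ is convex if and only if $\Delta_f$ is non-decreasing.

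The design principle is to combine two sources of hardness, a $\frac{1}{\eps}$ factor from a sparse, randomly placed defect and a $\log(\eps s)$ factor from hiding the scale of that defect, by adapting the Belovs--Blais--Bommireddi construction behind the $\Omega(\frac{\log(\eps n)}{\eps})$ bound. In $\mathcal{D}_{\mathrm{yes}}$ I would let $\Delta_f$ be a non-decreasing step function that cycles through its $s$ distinct values on runs spread essentially uniformly across $[n]$, so that each window of length $\eps n$ exhibits $\Theta(\eps s)$ distinct derivative values. A draw from $\mathcal{D}_{\mathrm{no}}$ agrees with such a function except on a single window $R$ of length $\eps n$, chosen uniformly among the $\Theta(1/\eps)$ aligned windows; inside $R$ I would perturb $\Delta_f$ so that it fails to be monotone, forcing $\restr{f}{R}$ to be $\Omega(\eps n)$-far from convex and hence $f$ to be $\eps$-far globally. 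The perturbation inside $R$ is the Belovs--Blais--Bommireddi gadget on the $\Theta(\eps s)$ levels present in $R$, engineered so that the monotonicity violation is concealed at a random dyadic scale and detecting it requires $\Omega(\log(\eps s))$ queries landing inside $R$. Since each query lands in the random window $R$ with probability only $\eps$, gathering $\Omega(\log(\eps s))$ such queries needs $\Omega(\frac{\log(\eps s)}{\eps})$ queries overall. Spreading $s$ levels uniformly is exactly what makes a $\frac{1}{\eps}$-sized window carry $\eps s$ levels, which is where $\log(\eps s)$ rather than $\log s$ enters; the budget bookkeeping is consistent, since permuting existing levels inside $R$ keeps the global count at $s$ while each window exposes only an $\eps$-fraction of them.

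With the family in place, the proof would proceed in three verification steps. First, I would check the structural promises: every function in the support uses at most $s$ distinct derivatives, every function in $\mathcal{D}_{\mathrm{yes}}$ is convex, and every function in $\mathcal{D}_{\mathrm{no}}$ is $\eps$-far from convex. The last point is the delicate one and reduces to showing that repairing the perturbed window $R$ forces changing $\Omega(\eps n)$ values; here one must control the magnitude of the inversion, not merely its presence, using that the derivative within $R$ sweeps through $\Theta(\eps s)$ distinct levels so that the induced second-order deviations are large enough to force $\Omega(\eps n)$ modifications. Second, I would prove the indistinguishability bound: by randomizing the window position and the dyadic scale of the inversion, any transcript of fewer than $c\cdot\frac{\log(\eps s)}{\eps}$ adaptive queries induces statistically close answer distributions under the two distributions, because such a transcript fails, with high probability, either to place enough queries inside $R$ or to place them at the scale that reveals the inversion. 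This is the step I would import most directly from the Belovs--Blais--Bommireddi analysis, after rescaling their domain parameter from $n$ to the per-window level count $\eps s$. Combining the two steps through Yao's principle then yields the stated bound in the regime $\eps\in[1/s,1/9]$ and $n\ge s$ large, where $\eps\ge 1/s$ guarantees $\eps s\ge 1$ so that $\log(\eps s)$ is meaningful.

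I expect the main obstacle to be the simultaneous satisfaction of far-ness and hiddenness inside the defect window. These pull in opposite directions: a blatant reversal of the derivative levels makes $\restr{f}{R}$ maximally far from convex but is detectable with $O(1)$ queries, whereas a single localized inversion is easy to conceal but may leave $f$ close to convex, exactly the phenomenon flagged in the remark following the preliminaries. Resolving this tension, by choosing a hierarchical inversion that is at once large in $f$-distance and scale-concealed in the query transcript, and then transferring the Belovs--Blais--Bommireddi indistinguishability argument to the rescaled parameter $\eps s$ while keeping the distinct-derivative budget at $s$, is the technical heart of the proof.
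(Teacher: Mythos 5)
Your overall skeleton (Yao's principle, a defect hidden in a random window of length $\Theta(\eps n)$, concealment at a random dyadic scale) mirrors the \emph{internal} structure of the Belovs--Blais--Bommireddi lower bound, and a from-scratch proof along these lines could in principle be made to work. But as written there is a genuine gap: the step you describe as importing their analysis ``after rescaling their domain parameter from $n$ to the per-window level count $\eps s$'' is not an available black box. Their theorem is stated for functions on a domain of size $m$ with the distance parameter measured relative to $m$; there is no off-the-shelf variant in which a window of $\eps n$ domain points carries only $\eps s$ derivative levels. Producing such a variant requires (i) a stretching argument showing that spreading few derivative levels over many points preserves distance to convexity (farness of $f$, not of $\Delta_f$ --- exactly the pitfall flagged in the paper's preliminaries), and (ii) a re-derivation of their indistinguishability analysis for the stretched gadget, including the adaptive-query bookkeeping that you summarize as ``each query lands in $R$ with probability $\eps$,'' which is not automatic against adaptive algorithms. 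This is precisely the farness-versus-hiddenness tension you flag in your final paragraph, and your proposal leaves it unresolved; it is the entire technical content of the statement.

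The paper's proof sidesteps all of this with a short reduction you missed: since any function on domain $[s]$ trivially has at most $s$ distinct discrete derivatives, the hard distributions $\mathcal{D}_0,\mathcal{D}_1$ of Belovs et al.\ can be instantiated directly at domain size $s$, giving the $\Omega\left(\frac{\log(\eps s)}{\eps}\right)$ bound with the derivative promise for free. These are then lifted to $[n]$ by linear interpolation on blocks of length $k = n/s$: this keeps at most $s-1$ distinct slopes, preserves convexity, and preserves $\eps$-farness by a bad-point counting argument (repairing a block endpoint forces modifying an entire adjacent block of $k-1$ interpolated points, so at most $|B|/k = \eps s$ endpoints can be bad, which would make the original $g$ on $[s]$ $\eps$-close to convex, a contradiction). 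Finally, any tester on $[n]$ is simulated on $[s]$ by answering each query from the two endpoints of its block, at most doubling the query count. Note that your own plan contains this stretching idea implicitly --- inside each window, $\eps s$ levels must occupy $\eps n$ points --- so applying it globally to the distributions on $[s]$ collapses your three-step program into a clean reduction requiring no new gadget construction or indistinguishability analysis.
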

\begin{proof}
We use Yao's principle.
Let $s \in \mathbb{N}$ and $\eps \in [1/s,1/9]$.
Consider the distributions $\mathcal{D}_0$ and $\mathcal{D}_1$ from Belovs et al.~\cite{BelovsBB20} (proof of Theorem 1.3) of functions $f: [s] \to \mathbb{R}$.
Every function sampled from these distributions have at most $s$ distinct derivatives.
Moreover, every function sampled from $\mathcal{D}_0$ is convex, and every function sampled from $\mathcal{D}_1$ is $\eps$-far from convex.
They show that every tester distinguishing these distributions, with probability at least $2/3$, has to make at least $\Omega\left(\frac{\log (\eps s)}{\eps}\right)$ queries.

Consider an integer $n \ge s$ that is an integer multiple of $s$.
Let $k$ denote $n/s$.
We define distributions $\mathcal{D}'_0$ and $\mathcal{D}'_1$ of functions $f: [n] \to \mathbb{R}$ as follows.

For $b \in \{0,1\}$, to sample a function $f$ from $\mathcal{D}'_b$, first sample a function $g$ from $\mathcal{D}_b$.
For $i \in [s]$, let $f((i-1)k + 1) = g(i)$.
For $i \in [s-1]$, let $\mathsf{slope}_i = \frac{g(i+1) - g(i)}{k}$.  
Now, for all $j \in [k-1]$ and for all $i \in [s-1]$, set $f((i-1)k+1 + j) = f((i-1)k+1) + j \cdot \mathsf{slope}_i$.

By construction, every function sampled from $\mathcal{D}'_0$ is convex. Additionally, every function sampled from $\mathcal{D}'_1$ is $\eps$-far from convex. 
To see this, consider a function $f$ sampled from $\mathcal{D}'_1$ that is $\eps$-close to being convex. Let $g$ denote the function sampled from $\mathcal{D}_1$ from which we constructed $f$. 
Let $B \subseteq [n]$ denote the set of \emph{bad} points such that changing the values of $f$
on points in $B$ makes it convex. 
Since $f$ is piecewise linear, it is clear that for each point in $B$ of the form $(i-1)k+1$ for $i \in [s]$, either
the set of $k-1$ points $\{(i-1)k+1 + j: j \in [k-1]\}$, or the set of $k-1$ points $\{(i-2)k+1 + j: j \in [k-1]\}$ has to belong to $B$.
Thus, the number of points in $B$ of the form $(i-1)k+1$ for $i \in [s]$ is at most $|B|/k = \eps s$.
By construction of $f$, we know that for all $i \in [s]$, it holds that $f((i-1)k + 1) = g(i)$.
Thus, the distance of $g$ to convexity is at most $\eps s$.

Consider a deterministic algorithm $A$ that distinguishes these distributions by making $o\left(\frac{\log (\eps s)}{\eps}\right)$ queries. 
One can use $A$ to distinguish, with the same success probability, the distributions $\mathcal{D}_0$ and $\mathcal{D}_1$ by making at most twice the number of queries as $A$, which leads to a contradiction.
\end{proof}

\bibliography{convex_test_arxiv}
%\nocite{*}
\newpage
\begin{appendix}
\section{Basic Tester for Convexity over Subdomains of $[n]$}
In this section, we prove Theorem~\ref{thm:belovs-pseudo-er-POT}. 
All definitions and lemmas in this section are straightforward generalizations of those by Belovs et al.~\cite{BelovsBB20}.

\begin{definition}
[Test-Set, and its Root, Hub, and Scale]
Consider a point $a \in B$ and an integer $k$. Let $h \in B$ be one among the two points closest to $a$ such that $\ord(h)$ is a multiple of $2^k$. 
Let $a', h' \in B$ be such that $\ord(a') = \ord(a) + 1$, and $\ord(h') = \ord(h) + 1$.
We refer to $\{a,a',h,h'\}$ as the \emph{test-set} with \emph{root} $a$, \emph{hub} $h$, and \emph{scale} $2^k$. 
\label{def:test-sets}
\end{definition}

\begin{lemma}\label{lem:test-triplet-common-hub}
Consider points $x, y \in B$ such that $\ord(x) < \ord(y) - 1$. Then there exists test-sets with roots $x$ and $y$ respectively, having a common hub and scales at most $2(\ord(y) - \ord(x)).$ 
\end{lemma}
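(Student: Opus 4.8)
The plan is to work entirely on the order-axis. Write $p = \ord(x)$ and $q = \ord(y)$, and set $d = q - p$, so that $d \ge 2$ by the hypothesis $\ord(x) < \ord(y) - 1$. Since $\ord$ is a bijection from $B$ onto $\{1, \dots, |B|\}$, producing a common hub amounts to exhibiting an integer $M \in (p, q]$ together with exponents $k_1, k_2$ such that $M$ is a multiple of $2^{k_1}$ lying among the two multiples of $2^{k_1}$ closest to $p$, $M$ is a multiple of $2^{k_2}$ lying among the two multiples of $2^{k_2}$ closest to $q$, and $2^{k_1}, 2^{k_2} \le 2d$. The point $h \in B$ with $\ord(h) = M$ is then the common hub of the two test-sets (with scales $2^{k_1}$ and $2^{k_2}$) in the sense of Definition~\ref{def:test-sets}, and $M \in (p,q] \subseteq \{1, \dots, |B|\}$ guarantees $h$ exists. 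I would first record the elementary reformulation used throughout: if $M$ is a multiple of $2^k$ with $p < M$ and $M - p \le 2^k$, then the previous multiple $M - 2^k \le p$, so $M$ is the smallest multiple of $2^k$ exceeding $p$ and hence is among the two closest to $p$; symmetrically, if $M \le q$ and $q - M < 2^k$, then $M$ is the largest multiple of $2^k$ not exceeding $q$ and hence among the two closest to $q$.

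Next I would set $k_0 = \lfloor \log_2 d \rfloor$, so that $2^{k_0} \le d < 2^{k_0 + 1} \le 2d$. Because the half-open interval $(p, q]$ contains exactly $d$ integers and $2^{k_0} \le d < 2^{k_0+1}$, it contains either one or two multiples of $2^{k_0}$, and the proof splits on this count. In the one-multiple case, let $M$ be the unique multiple of $2^{k_0}$ in $(p,q]$; then $M - 2^{k_0} \le p$ gives $M - p \le 2^{k_0}$ and $M + 2^{k_0} > q$ gives $q - M < 2^{k_0}$, so by the reformulation $M$ is a common hub usable at scale $2^{k_0} \le d \le 2d$ for both roots.

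In the two-multiple case, the multiples are $M$ and $M + 2^{k_0}$ with $p < M$ and $M + 2^{k_0} \le q$, and since $M$ is the smallest multiple of $2^{k_0}$ exceeding $p$ we again have $M - p \le 2^{k_0}$. Exactly one of $M, M+2^{k_0}$ is divisible by $2^{k_0+1}$ (their indices as multiples of $2^{k_0}$ have opposite parity); call it $N$. Then $p < N$, and $N - p \le (M + 2^{k_0}) - p \le 2^{k_0+1}$, while $N \le q$ and $q - N \le q - M < q - p = d < 2^{k_0+1}$. Applying the reformulation with the single scale $2^{k_0+1} \le 2d$ shows $N$ is among the two multiples of $2^{k_0+1}$ closest to $p$ and among those closest to $q$, so $N$ is a common hub for both roots at scale $2^{k_0+1}$.

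The hard part is the choice of scale, and I would flag it as the main obstacle: neither of the two ``obvious'' single choices works on its own. Taking the scale $2^{\lceil \log_2 d\rceil}$ can place both $p$ and $q$ inside one dyadic block whose bounding multiples fall outside $\{1,\dots,|B|\}$, leaving no in-range hub; taking $2^{\lfloor \log_2 d\rfloor}$ can separate $p$ and $q$ by two blocks, so their nearest multiples are disjoint and no common hub exists at that scale. The case analysis above — distinguishing whether $(p,q]$ meets one or two multiples of $2^{\lfloor \log_2 d\rfloor}$, and in the two-multiple case passing to $2^{\lfloor \log_2 d\rfloor + 1}$ via the multiple divisible by the higher power — is exactly what repairs both failures at once, and the delicate checks are that the scale remains bounded by $2d$ and that the chosen hub stays in the range $(p,q]$.
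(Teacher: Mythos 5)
Your proof is correct and is essentially the paper's argument: the paper picks the smallest $k$ for which the interval $[\ord(x),\ord(y)]$ contains a \emph{unique} multiple $m$ of $2^k$, notes that minimality forces at least two multiples of $2^{k-1}$ in the interval and hence $2^{k-1}\le \ord(y)-\ord(x)$, and takes $m$ as the common hub at scale $2^k\le 2(\ord(y)-\ord(x))$. Your explicit case split on $k_0=\lfloor\log_2 d\rfloor$ (one versus two multiples of $2^{k_0}$ in $(p,q]$, bumping to $2^{k_0+1}$ in the latter case) simply computes that minimal scale constructively and verifies the ``among the two closest'' conditions that the paper leaves implicit, arriving at the same hub and the same $2d$ bound.
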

\begin{proof}
Consider the smallest integer $k$ such that there exists a unique multiple $m$ of $2^k$ satisfying $\ord(x) \le m \le \ord(y)$.
There are at least $2$ multiples of $2^{k-1}$ in the range $[\ord(x),\ord(y)]$.
If there were only one multiple of $2^{k-1}$ in that range, then it contradicts our assumption that $k$ is the smallest integer such that there is a unique multiple of $2^k$ in $[\ord(x),\ord(y)]$. 
%If there were four or more multiples of $2^{k-1}$ in the range $[\ord(x),\ord(y)]$, then there are two or more multiples of $2^k$ in the same range, which also contradicts our assumption.
Now, since there are at least $2$ multiples of $2^{k-1}$ in the range $[\ord(x),\ord(y)]$, we have that $2^{k-1} \le (\ord(y) -\ord(x))$.

The lemma follows by setting $h\in B$ such that $\ord(h) = m$ as the common hub of the test-sets with roots $x$ and $y$, and $2^k$ as their scale.
\end{proof}
\begin{lemma}\label{lem:conv-violation}
Consider a set $\{x,y,z\} \subseteq B$ that violates convexity such that $\ord(x) < \ord(y) < \ord(z)$. 
At least one of the test-sets involving $x,y,$ or $z$ violate convexity. Moreover, scale is not exceeding $2\cdot\max\{\ord(z)-\ord(y), \ord(y)-\ord(x)\}$.
\end{lemma}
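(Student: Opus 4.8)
The plan is to reduce the given secant violation to an inversion between two \emph{consecutive} pairs whose order-gap lies entirely on one side of $y$, and then invoke Lemma~\ref{lem:test-triplet-common-hub} to upgrade that inversion into a violating test-set of controlled scale. Throughout, for a non-maximal point $a \in B$ I write $a'$ for its order-successor (so $\ord(a') = \ord(a)+1$) and set $d(a) = \frac{f(a') - f(a)}{a' - a}$, the slope of the consecutive pair rooted at $a$. Convexity of $\restr{f}{B}$ is equivalent to $d$ being non-decreasing along the order, and the basic fact I will use repeatedly is that a test-set with root $a$ and hub $h$ violates convexity whenever $\ord(a) < \ord(h)$ and $d(a) > d(h)$ (and symmetrically when $\ord(h) < \ord(a)$ and $d(h) > d(a)$): if the two pairs share the middle point this is immediate, and otherwise it follows since the middle secant slope cannot simultaneously lie above $d(h)$ and below $d(a)$.

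First I would record that $\{x,y,z\}$ violating convexity means $\frac{f(y)-f(x)}{y-x} > \frac{f(z)-f(y)}{z-y}$. Writing the left-hand secant slope as the convex combination $\frac{f(y)-f(x)}{y-x} = \frac{\sum_i (c_{i+1}-c_i)\, d(c_i)}{\sum_i (c_{i+1}-c_i)}$ over the consecutive points $x = c_0 < c_1 < \dots < c_{\ord(y)-\ord(x)} = y$, some pair rooted at a point $p$ with $\ord(x) \le \ord(p) < \ord(y)$ satisfies $d(p) \ge \frac{f(y)-f(x)}{y-x}$. Symmetrically, some pair rooted at $q$ with $\ord(y) \le \ord(q) < \ord(z)$ satisfies $d(q) \le \frac{f(z)-f(y)}{z-y}$. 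Chaining the inequalities gives $d(p) > d(q)$ with $\ord(p) < \ord(y) \le \ord(q)$.

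The key step is to route the inversion through $y$ so the surviving gap stays on one side. Since $d(p) > d(q)$, we cannot have both $d(p) \le d(y)$ and $d(y) \le d(q)$. Hence either $d(p) > d(y)$, yielding an inversion on the pair $(p,y)$ of order-gap $\ord(y)-\ord(p) \le \ord(y)-\ord(x)$, or $d(y) > d(q)$, which (as $d(q) < d(y)$ forces $\ord(q) > \ord(y)$) yields an inversion on $(y,q)$ of order-gap $\ord(q)-\ord(y) \le \ord(z)-\ord(y)$. In either case I obtain an ordered pair $u, v$ with $\ord(u) < \ord(v)$, $d(u) > d(v)$, and $\ord(v)-\ord(u) \le \max\{\ord(z)-\ord(y), \ord(y)-\ord(x)\}$. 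If $\ord(v)-\ord(u) = 1$, then $u, u', u''$ already form a violating test-set of scale $2^0 = 1$. Otherwise Lemma~\ref{lem:test-triplet-common-hub} furnishes test-sets with roots $u$ and $v$ sharing a common hub $h$ (so $\ord(u) \le \ord(h) \le \ord(v)$) of scale at most $2(\ord(v)-\ord(u))$. Since $d(u) > d(v)$, at least one of $d(u) > d(h)$ or $d(h) > d(v)$ holds; the former makes the root-$u$ test-set violate convexity (here $\ord(h) > \ord(u)$, as $\ord(h) = \ord(u)$ would force $d(h) = d(u)$), and the latter makes the root-$v$ test-set violate (here $\ord(h) < \ord(v)$ for the same reason). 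Either way the violating test-set has scale at most $2(\ord(v)-\ord(u)) \le 2\max\{\ord(z)-\ord(y), \ord(y)-\ord(x)\}$, as required.

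I expect the scale bound to be the only real obstacle. A naive reduction to a single inversion spanning both sides of $y$ gives order-gap up to $(\ord(y)-\ord(x)) + (\ord(z)-\ord(y))$ and hence scale up to $4\max\{\cdot,\cdot\}$; it is precisely the detour through the intermediate derivative $d(y)$ that confines the gap to a single side and recovers the tight factor of $2$. The rest is bookkeeping of the boundary cases (gap one, or a hub coinciding with a root), which the strict-inequality observations above dispatch.
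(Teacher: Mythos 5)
Your derivation is internally sound as far as it goes---the convex-combination step, the routing of the inversion through your $d(y)$ to keep the order-gap one-sided, and the boundary cases (gap one, hub equal to a root) are all handled correctly---but it proves a weaker statement than the lemma. The lemma asserts a violating test-set \emph{involving $x$, $y$, or $z$}; your violating test-set is rooted at $u$ or $v$ with $u,v \in \{p,\,y,\,q\}$, where $p$ and $q$ are whichever interior consecutive pairs achieve the extreme slopes in your convex combinations. In each case one branch does give root $y$, but the other gives root $p$ (respectively $q$), which in general is none of $x$, $y$, $z$, and the hub need not be one of them either. So what you have shown is ``there exists a violating test-set of scale at most $2\max\{\ord(z)-\ord(y),\,\ord(y)-\ord(x)\}$ rooted at \emph{some} point with order in $[\ord(x), \ord(z))$,'' which is strictly weaker than the stated conclusion.

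This weakening is not cosmetic, because of how the lemma is consumed in the proof of Theorem~\ref{thm:belovs-pseudo-er-POT}: there, $x<y<z$ are chosen as \emph{consecutive} points of $B \setminus C$, so every point of $B$ strictly between them already lies in $C$, and the greedy construction adds the involved point to $C$ and needs it to be new---i.e., one of $x,y,z$. Under your version, the root $p$ or $q$ may already belong to $C$, and the count of distinct violating roots stalls. The paper's proof avoids descending to interior consecutive pairs altogether: it applies Lemma~\ref{lem:test-triplet-common-hub} directly to the pairs $(x,y)$ and $(y,z)$, obtaining common hubs $h_1$ and $h_2$ of scales at most $2(\ord(y)-\ord(x))$ and $2(\ord(z)-\ord(y))$ respectively, and considers the four test-sets $\{x,x',h_1,h_1'\}$, $\{h_1,h_1',y,y'\}$, $\{y,y',h_2,h_2'\}$, $\{h_2,h_2',z,z'\}$---all rooted at $x$, $y$, or $z$. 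If none of these violates convexity, chaining \emph{secant} slopes (rather than consecutive-pair slopes) through the pair at $y$ yields $\frac{f(y)-f(x)}{y-x} \le \frac{f(y')-f(y)}{y'-y} \le \frac{f(z)-f(y)}{z-y}$, contradicting the violation of $\{x,y,z\}$. Your instinct that routing through the derivative at $y$ is what saves the factor $2$ in the scale is exactly the paper's mechanism; the idea your proof is missing is to keep $x,y,z$ as the roots and let the hubs, not interior pairs, carry the slope comparison.
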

\begin{proof}
Let $x', y', z' \in B$ be such that $\ord(x') = \ord(x) + 1$, $\ord(y') = \ord(y) + 1$, and $\ord(z') = \ord(z) + 1$. 

Let $h_1$ be the common hub for test-sets involving $x$ and $y$, as guaranteed by Lemma~\ref{lem:test-triplet-common-hub} and let $h_1' \in B$ be such that $\ord(h_1') = \ord(h_1) + 1$.
By Lemma~\ref{lem:test-triplet-common-hub}, the test-sets $\{x,x',h_1,h_1'\}$, and $\{h_1, h_1',y,y'\}$ both have scale at most $2(\ord(y)-\ord(x))$.

Similarly, let $h_2 \in B$ be the common hub for test-sets of $y$ and $z$ with scale at most $2(\ord(z)-\ord(y))$ and let $h_2' \in B$ be such that $\ord(h_2') = \ord(h_2)+1$.  
Note that $\{y,y', h_2, h_2'\}$, and $\{h_2, h_2', z, z'\}$ are the test-sets being alluded to in this case.

If none of the aforementioned four test-sets violate convexity, then, we immediately have the following inequalities:

\begin{align*}
    \frac{f(y) - f(x)}{y - x} \le \frac{f(y) - f(h_1')}{y - h_1'} \le f(y') - f(y) \le \frac{f(z) - f(y)}{z - y}.
\end{align*}
This contradicts our assumption that $\{x,y,z\}$ violates convexity.
\end{proof}
We are now ready to prove Theorem~\ref{thm:belovs-pseudo-er-POT}.
The query complexity of Algorithm~\ref{alg:pseudo-er-POT-convexity} is clear from its description. Additionally, it always accepts convex functions.
Consider a function $f:B\to \mathbb{R} $ that is $\eps$-far from convex. 
The total number of possible scales for test-sets (see Algorithm~\ref{alg:pseudo-er-POT-convexity}) is at most $1 + \lceil \log_2 (2\eps|B|)\rceil$.
Hence, the total number of test-sets is at most $2|B|\cdot (1 + \lceil \log_2 (2\eps|B|)\rceil)$.

We will construct a set $C \subseteq B$ such that $|C| \ge 
\eps \cdot |B|$ and for every $a \in C$, one of the test-sets rooted at $a$ of scale at most $1 + \lceil \log_2 (2\eps|B|)\rceil$ violate convexity.
Initialize $C$ to be $\emptyset$. 
If $|C| < \eps \cdot |B|$, then $\restr{f}{B\setminus C}$ is not convex.
We can find consecutive points $x < y < z \in B \setminus C$, such that $f$ violates convexity on these three points. 
Since $C$ contains fewer than $\eps |B|$ many points, we have that $\max\{\mathsf{order}(y) - \ord(x), \ord(z) - \ord(y)\}$ is at most $\eps |B|$.
Hence, by Lemma~\ref{lem:conv-violation}, we know that there exists a violating test-set involving either $x,y,$ or $z$ of scale at most $2 \eps |B|$.
We add such a point to set $C$.
Hence, we have at least $\eps n$ violating test-sets.
Therefore, the tester rejects with probability at least $\frac{2\eps}{\log(2\eps n)}$ in a single iteration.
\end{appendix}

\end{document}